\newcommand{\kahler}{K\"{a}hler }
\newcommand{\C}{\mathbb{C}}     
\newcommand{\dd}{\mathrm{d}}   
\def\beq{\begin{equation}\begin{aligned}}
\def\eeq{\end{aligned}\end{equation}}
\def\OO{\mathcal{O}}
\def\d{\textrm{d}}
\def\d{\textrm{d}}
\def\tr{\textrm{tr}}
\def\OO{\mathcal{O}}
\def\a{\alpha'}
\def\bar#1{\overline{#1}}
\def\inv{^{\raise.15ex\hbox{${\scriptscriptstyle -}$}\kern-.05em 1}}
\def\lbar{{\lower.35ex\hbox{$\mathchar'26$}\mkern-10mu\lambda}} 
\def\p{\partial}
\def\bp{\bar\partial}
\def\F{\mathcal{F}}
\def\R{\mathcal{R}}
\def\End{{\rm End}}
\def\Q{\mathcal{Q}}
\def\M{\mathcal{M}}
\def\H{\mathcal{H}}
\def\C{\mathcal{C}}
\def\g{\bold g}
\let\p=\partial
\def\H{\mathcal{H}}
\definecolor{mygreen}{RGB}{29,145,47}
\definecolor{mypurple}{RGB}{164,64,214}
\definecolor{myorange}{RGB}{199,146,32}
\newtheorem{Theorem}{Theorem}
\newtheorem*{Theorem*}{Theorem}
\definecolor{orange}{rgb}{1,0.5,0}
\begin{document}
\title{The Heterotic Superpotential and Moduli}


\author[a]{Xenia de la Ossa,}
\author[b]{Edward Hardy}
\author[c,d,e]{and Eirik Eik Svanes}

\affiliation[a]{Mathematical Institute, University of Oxford, Andrew Wiles Building, Radcliffe Observatory Quarter, Woodstock Road, Oxford OX2 6GG, UK}
\affiliation[b]{Abdus Salam International Centre for Theoretical Physics, Strada Costiera 11, 34151, Trieste, Italy}
\affiliation[c]{Sorbonne Universit\'es, UPMC Univ Paris 06, UMR 7589, LPTHE, F-75005, Paris, France}
\affiliation[d]{CNRS, UMR 7589, LPTHE, F-75005, Paris, France}
\affiliation[e]{Sorbonne Universit\'es, Institut Lagrange de Paris, 98 bis Bd Arago, 75014 Paris, France\\}

\emailAdd{delaossa@maths.ox.ac.uk}
\emailAdd{ehardy@ictp.it} 
\emailAdd{esvanes@lpthe.jussieu.fr}


\abstract{We study the four-dimensional effective theory arising from ten-dimensional heterotic supergravity compactified on manifolds with torsion. In particular, given the heterotic superpotential appropriately corrected at $\OO(\a)$ to account for the Green-Schwarz anomaly cancellation mechanism, we investigate properties of four-dimensional Minkowski vacua of this theory. Considering the restrictions arising from F-terms and D-terms we identify the infinitesimal massless moduli space of the theory. We show that it agrees with the results that have recently been obtained from a ten-dimensional perspective where supersymmetric Minkowski solutions including the Bianchi identity  correspond to an integrable holomorphic structure, with infinitesimal moduli calculated by its first cohomology. As has recently been noted, interplay of complex structure and bundle deformations through holomorphic and anomaly constraints can lead to fewer moduli than may have been expected. We derive a relation between the number 
of complex structure and bundle moduli removed from the low energy theory in this way, and give conditions for there to be no complex structure moduli or bundle moduli remaining in the low energy theory. The link between Yukawa couplings and obstruction theory is also briefly discussed.}

\maketitle


\section{Introduction}
While the low energy effective theories arising from compactifications of heterotic string theory are attractive for their ease of reproducing the matter content and gauge interactions of the Standard Model, moduli stabalisation has long remained a significant challenge. For mathematical clarity and the ability to calculate many explicit examples, much of the literature so far has focused on the Calabi-Yau compactifications first considered in \cite{Candelas:1985en}. However, there is no intrinsic reason to prefer these to compacifications including torsion. The latter may even, speculatively, be hoped to be more amiable to moduli stabilisation. Indeed it has been argued that torsion can be used in a similar fashion to flux to stabilise moduli in heterotic compactifications \cite{Gurrieri:2004dt, Gurrieri:2007jg, Klaput:2012vv}.

In this paper we explore the four-dimensional effective theories arising from compacifications with torsion, focusing especially on identifying their moduli space. Of course, a large body of related work already exists. The general supersymmetry conditions and structures for such theories were given in the papers \cite{Strominger1986253,Hull:1985zy,Hull:1986kz,deWit:1986xg}, and the ten-dimensional action was obtained in \cite{Bergshoeff:1988nn,Bergshoeff:1989de} while an early study of the relevant geometrical tools appeared in \cite{Lust:1986ix}. Further study, explicit examples of flux compactifications, and important results regarding the moduli of these theories followed for example in \cite{Becker:2003yv,Becker:2003sh}. Particularly relevant to our current work, \cite{Becker:2005nb, Becker:2006xp} considered the moduli space of heterotic compactifications under certain simplifying assumptions. Additionally, in \cite{Becker:2003gq} the relation of the superpotential to the moduli space was studied, and it 
was shown that some torsional solutions typically lead to a small compactification radius: a regime in which it can be hard to make sense of the supergravity approximation. 

In the torsion free Calabi-Yau case, the infinitesimal moduli corresponding to simultaneous deformations of the bundle together with the complex structure of the Calabi-Yau, have been identified and expressed in terms of Atiyah Classes in \cite{Anderson:2011ty}. In particular, the moduli space is not simply the direct product of the complex structure and bundle moduli. Instead, as a deformation of the complex structure alters the definition of a holomorphic bundle, if a compensating deformation in the bundle is not possible in this direction it will not correspond to a massless field. From the perspective of the 4d theory, this was understood as F-terms associated to (possibly heavy) modes lifting the potential in some directions, and examples have been obtained where no complex structure moduli remain in the low energy theory \cite{Anderson:2010mh, Anderson:2011cza}. 

Recently the infinitesimal moduli of solutions of the Strominger system with torsion from the perspective of the ten-dimensional supergravity theory has been understood \cite{delaOssa:2014cia, Anderson:2014xha, GarciaFernandez:2015hja}. In particular, it was shown how to  include the heterotic anomaly in a natural extension of the holomorphic Atiyah story. Specifically, in \cite{delaOssa:2014cia} it was shown how the Strominger system can be rephrased as an integrability condition for a holomorphic structure on a double extension bundle. This then includes the complex base, the holomorphic bundle {\it and the heterotic Bianchi identity}. It is the purpose of the present paper to re-derive the results of this paper from the perspective of the four-dimensional superpotential, thus generalising the work done in \cite{Anderson:2010mh}, which focused on the complex structure moduli of the base together with the bundle moduli. It should be noted that a couple of key features of the Strominger system are not accounted for by the holomorphic double 
extension. In particular, the manifold is required to be conformally balanced while the bundles should satisfy the Yang-Mills condition. We shall see how these conditions are accounted for in the moduli problem, and how they relate to D-term conditions in the four-dimensional effective supergravity. 

The paper is organised as follows.  In Section \ref{sec:sptot}, we study the first order variations of the superpotential with torsion which produces some of the well known constraints on the compactifying 6 dimensional manifold $X$ and the gauge vector bundle $V$ over $X$, namely that $X$ is a complex manifold with a nowhere vanishing holomorphic three form $\Omega$, that $\End(V)$ and $\End(TX)$ are equipped with holomorphic connections and moreover, that the three form flux $H$ is must be related to the hermitian form $\omega$ by the formula $H = \dd^c\omega$ (see Appendix \ref{app:SU3}). 
In Section \ref{sec:ms} we show how the infinitesimal moduli in torsional theories can be derived by considering the four-dimensional superpotential of the low energy effective theory. The result is found to agree with the ten-dimensional perspective apart from an extra restriction on the moduli space. We also review the computation of the moduli space in terms of maps on cohomologies, and investigate these maps further in relation to the lifting of some of the bundle and complex structure moduli. In particular, one of the main results of this paper uses these maps to prove that at least as many complex structure moduli as bundle moduli are lifted from the four-dimensional massless spectrum. 

In Section \ref{sec:Dterm} we show that the additional restriction mentioned above which is obtained in the ten-dimensional supergravity computation, not arising from F-terms in the effective theory, is generated by four-dimensional D-terms completing the correspondence between the work in \cite{delaOssa:2014cia} and the study of the superpotential of the effective four-dimensional theory. Similarly to the Calabi-Yau compactifications which had been studied previously, these restrictions come from the requirement that the variation of the ten-dimensional gaugino fields vanishes.  This is equivalent to the polystability of the gauge bundle. In Section \ref{sec:EndTconn} we also discuss deformations of the $\End(TX)$-valued connection appearing in the Bianchi identity. As shown in \cite{delaOssa:2014msa} these correspond to field redefinitions and can hence be ignored from a physics point of view. They are however needed for the most natural implementation of the mathematical structure presented here. We also mention the role of higher order obstructions and Yukawa couplings in Section \ref{sec:yuk}, although many details of this topic are 
left to further work. 

Finally, in Section \ref{sec:examples} we consider example compactifications. We revisit the standard embedding \cite{Candelas:1985en}, showing that as expected no margial moduli will be lifted in this case. We then turn to consider conditions for  all bundle and/or complex structure moduli to be lifted (previous work on this topic has appeared in \cite{Anderson:2010mh, Anderson:2011cza, Anderson:2011ty}). The conclusions we draw are not specific to Calabi-Yau compactifications, extending previous results to more general torsional compactifications. The most phenomenologically interesting examples of smooth compactifications that our results are relevant to  (so far in the literature) are of the form where a large volume compact $\a\rightarrow0$ Calabi-Yau limit exists.

We have left some technicalities concerning $SU(3)$-structures and the $\a$-expansion to Appendix \ref{app:SU3} and \ref{app:exp}.

\section{The Heterotic Superpotential with Torsion}\label{sec:sptot}
We consider compactifications to four dimensions which preserves $N=1$ supersymmetry. This implies that the six-dimensional internal space has a nowhere vanishing spinor, which in turn implies that the internal space has an $SU(3)$-structure. The heterotic superpotential for a compactification on a manifold with an $SU(3)$-structure $(X,\Omega,\omega)$ with vector bundle $(V,A)$ (details of our notation are defined in Appendix \ref{app:SU3}) is \cite{Becker:2003yv, Cardoso:2003af, Becker:2003gq, Gurrieri:2004dt}
\begin{equation}
\label{eq:suppot}
W=\int_X(H+i\d\omega)\wedge\Omega\:,
\end{equation}
where 
\begin{equation}
\label{eq:anomaly}
H=H_0+\frac{\a}{4}(\omega_{CS}^A-\omega_{CS}^\nabla)\:,
\end{equation}
with $H_0=\d B$ for some local two-form potential $B$, and the Chern-Simons forms are
\begin{equation*}
\omega_{CS}^A=\tr\:\Big(A\wedge\d A+\frac{2}{3}\,A^3\Big)\:,
\end{equation*}
and similar for $\omega_{CS}^\nabla$. Here $\nabla$ is some spin-connection on the tangent bundle of $X$.\footnote{$\nabla$ is usually taken to be an instanton connection \cite{Ivanov:2009rh, Martelli:2010jx},
\begin{equation*}
R_{mn}\gamma^{mn}\eta=0\:,
\end{equation*}
where $\eta$ is the six-dimensional spinor on $X$ and $R$ is the curvature of $\nabla$. We will not start with this assumption, but rather derive it from the superpotential and D-term potential.}

\subsection{First Order Deformations}
The supersymmetry conditions may be derived by considering the first order deformation of $W$. This gives the F-terms of the fields in the action of the low energy effective theory and therefore they must vanish. This has been previously discussed in, for example, \cite{Becker:2003yv, Cardoso:2003af, Becker:2005nb}, and the conditions derived are equivalent to the 10d supersymmetry conditions in \cite{Strominger1986253}. 

Considering a first order deformation of $W$, Eq.\eqref{eq:suppot}, the supersymmetry condition is
\begin{equation}
\label{eq:delta1W}
\delta W=\int_X\left(\frac{\a}{2}\big(\tr\:(\delta A\wedge F)-\tr\:(\delta\Theta\wedge R)\big)+\d\tau\right)\wedge\Omega+\int_X(H+i\d\omega)\wedge\delta\Omega = 0\:,
\end{equation}
where $F$ and $R$ are the curvatures of $A$ and $\nabla$ respectively, $\Theta$ is the connection one-form of $\nabla$, and where we have defined
\begin{equation}
\label{eq:complKahler}
\tau=\delta B+i\delta\omega+\frac{\a}{4}\big(\tr\:(\delta A\wedge A)-\tr\:(\delta\Theta\wedge\Theta)\big)\:.
\end{equation}
We may think of $\tau$ as a deformation of  the complexified hermitian form, appropriately adjusted by the connection terms at $\OO(\a)$. Using the Green-Scwarz mechanism, $\tau$ can be shown to be gauge invariant \cite{delaOssa}. Moreover, due to flux quantisation $\tau$ is a globally well-defined two-form.
 
The first two terms of equation \eqref{eq:delta1W} then tell us that supersymmetry requires
\begin{equation*}
F\wedge\Omega=0\:,\;\;\;R\wedge\Omega=0\:,
\end{equation*}
or equivalently $F^{(0,2)}=R^{(0,2)}=0$ with respect to the (almost) complex structure determined by $\Omega$. Hence, $(V,A)$ and $(TX,\nabla)$ are holomorphic vector bundles. Integrating by parts the third term in Eq.\eqref{eq:delta1W}, we get
\begin{equation*}
\d\Omega=0\:,
\end{equation*}
that is, $\Omega$ is a holomorphic three-form. It follows that the almost complex structure determined by $\Omega$ is integrable. Finally, for the last term of Eq.\eqref{eq:delta1W}, we recall that 
\[\delta\Omega=K\,\Omega+\chi~\]
 where $\chi\in \Omega^{(2,1)}(X)$. Therefore, 
\begin{equation*}
(H+i\d\omega)^{(0,3)}=(H+i\d\omega)^{(1,2)}=0\:.
\end{equation*}
It follows that
\begin{equation}
H=i(\p-\bp)\omega=\d^c\omega\:.\label{eq:H}
\end{equation}

In conclusion,  we have found that the supersymmetry condition
\begin{equation*}
\delta W=0
\end{equation*}
requires that $(X,\Omega,\omega)$ to be a complex manifold which admits a no-where vanishing holomorphic three-form $\Omega$, and that the bundles $(V,A)$ and $(TX, \nabla)$ are holomorphic vector bundles.  Moreover, the three form flux $H$ is required to be given by equation \eqref{eq:H}. Note also that these conditions imply that $W=0$, implying that the vacuum is Minkowski. It is worth remarking at this point that the remaining constraints in the Strominger system  (see Appendix \ref{app:SU3}), namely the conformally balanced condition and the Yang-Mills equations for the connections on  $\End(V)$ and $\End(TX)$, are obtained from the vanishing of $D$-term superpotential as we will see later in Section \ref{sec:YM} and \ref{sec:EndTconn}.  

\section{The Moduli Space} \label{sec:ms}

\subsection{Second Order Deformations}
We now consider the massless moduli degrees of freedom associated the superpotential Eq.\eqref{eq:suppot}. The Hesse matrix can be found in e.g. \cite{Denef:2004cf}, and at an extremum of the four-dimensional F-term potential it reads 
\begin{align*}
V_{I\bar J}=e^{{\cal K}}\left(-R_{I\bar J K\bar L}\, \bar F^JF^{\bar K}+{\cal K}_{I\bar J} \vert F \vert^2 -F_I\bar F_{\bar J}+(D_IF_K)(\bar D_J\bar F^K) -2{\cal K}_{I\bar J}\vert W\vert^2\right) .
\end{align*}
Here ${I,J,..}$ denote generic moduli directions, $\cal K$ is the K\"ahler potential for the K\"ahler metric on the heterotic moduli space, $D_I$ are moduli space covariant derivatives, $F_I=D_IW$ are the F-terms, and $R_{I\bar J K\bar L}$ is the curvature of the K\"ahler metric ${\cal K}_{I\bar J}$. This metric can be computed \cite{delaOssa}, and is a rather complicated function of the moduli, and whose full expression will not be required for our purposes. We are only interested in an understanding of how some of the moduli become massive, but not on the explicit values of these masses. Concerning ourselves with supersymmetric Minkowski vacua, both the superpotential and the F-terms vanish ($W=\delta W=0$), and the mass-matrix reduces to
\begin{equation}
\label{eq:4dmass}
V_{I\bar J}=e^{\cal K}\, (\p_I\p_K W)(\p_{\bar J}\p_{\bar L}\bar W)\,{\cal K}^{K\bar L}\:.
\end{equation}
Working in a basis for moduli where the mass matrix is diagonal, it is easy to see that the modulus corresponding to the direction $I$ is massless if and only if $\p_I\p_KW=0$ for all $K$.

To find the moduli satisfying this condition we need to consider not only deformations that preserve supersymmetry, and so have a chance of being massless, but also deformations breaking  supersymmetry, which although heavy can generate masses for some of the light moduli. Indeed, if $\delta_2\delta_1W$ is non-zero, where $\delta_2$ preserves the supersymmetry conditions and $\delta_1$ does not, then this will lead  to a large mass for the naively supersymmetric modulus $\delta_2$. Physically, $\delta_2 \delta_1 W = \delta_2 \left(F_1\right)$, where $F_1$ is the F-term associated to $\delta_1$, so a potential is generated in the direction $\delta_2$ if this does not vanish.\footnote{As discussed in  \cite{Anderson:2010mh} such heavy modes are never explicitly present in the 4d effective theory. We regard the agreement with rigorous 10d computations as a confirmation that this approach is consistent.} Of course, we will not worry about deformations where both $\delta_i$ are non-supersymmetric, 
as these are lifted out from the theory from the start. As discussed in the Introduction and \cite{Anderson:2010mh}, the result of these considerations is that the true moduli of the theory are a combination of complex structure, K\"ahler, and bundle deformations, which can be given in terms of a subspace of the original cohomologies. We will see how this works for the Strominger system including the anomaly cancellation condition in this section.

Let us begin by performing a second order deformation of $W$ at the supersymmetric locus $W=\delta W=0$. We take $\delta_1$ to be a generic deformation while $\delta_2$ is massless deformation. According to the above discussion and \eqref{eq:4dmass}, we need that
\begin{equation}
\begin{split}
\delta_2\delta_1W\vert_0&=\int_X\d\tau_1\wedge\delta_2\Omega
+\int_X\frac{\a}{2}\left(\tr\left(\delta_1A\wedge\delta_2(F\wedge\Omega)\right)-\tr\left(\delta_1\Theta\wedge\delta_2(R\wedge\Omega)\right)\right)\\[5pt]
&\quad+\int_X\delta_2(H+i\d\omega)\wedge\delta_1\Omega+\int_X(H+i\d\omega)\wedge\delta_2\delta_1\Omega= 0\:,
\end{split}\label{eq:delta2W}
\end{equation}
for all deformations $\delta_1$ if the deformation $\delta_2$ is to be massless. Here the zero in $\delta_2\delta_1W\vert_0$ denotes we are imposing the ten-dimensional supersymmetry conditions found in the previous section. 
 
\subsection{Holomorphic Bundle Deformations}
Consider the first term of Eq.\eqref{eq:delta2W}.  For generic $\tau_1$, this gives the condition
\begin{equation}
\label{eq:presInt}
\d\delta_2\Omega=0\:,
\end{equation}
where, as before, $\delta_2\Omega=K_2\,\Omega+\chi_2$, for $\chi_2\in \Omega^{(2,1)}(X)$, and $K_2$ can be taken to be a constant by modding out by diffeomorphisms of $X$. A nontrivial deformation of $\Omega$ then corresponds to an element
\begin{equation*}
\chi_2\in H^{(2,1)}(X)\cap\ker(\d)\:.
\end{equation*}
Recall that the existence of a well defined no-where vanishing $(3,0)$ form $\Omega$ defines an isomorphism between the cohomologies
\begin{equation*}
H^{(2,1)}(X)\cong H^{(0,1)}(TX)\:,
\end{equation*}
where $TX$ denotes the holomorphic tangent bundle.
The elements $\Delta_2\in H^{(0,1)}(TX)$ correspond to a non-trivial deformation of the complex structure $J$ which preserve integrability, however they do not necessarily preserve the holomorphicity of $\Omega$.  Hence the allowed deformations of the complex structure $\chi_2$, are those for which there exists a $\d$-closed representative in its  the cohomology class in $H^{(2,1)}(X)$. This holds true for all cohomology classes whenever the $\p\bar\p$-lemma is satisfied. To see this, consider
\begin{equation*}
[\chi]\in H^{(2,1)}(X)\;\;\Rightarrow\;\;\bp\p\chi=0\;\;\Rightarrow\;\;\p\chi=\p\bp\beta
\end{equation*}
for some $(2,0)$-form $\beta$. The last implication follows from the $\p\bp$-lemma. It follows that $\p(\chi-\bp\beta)=0$, and $[\chi-\bp\beta]=[\chi]\in H^{(2,1)}(X)$. In particular, this will be true whenever there exists a large volume $\a\rightarrow0$ compact Calabi-Yau limit, see Appendix \ref{app:exp}. For brevity, we will refer to such examples as examples where $X_0$ is Calabi-Yau, where the subscript zero refers to the $\a\rightarrow0$ limit. A less restrictive, but still sufficient condition for which this holds is \cite{delaOssa:2014cia}
\begin{equation}
\label{eq:nodil}
H^{(0,1)}(X)\cong H^{(0,2)}(X)=0\:,
\end{equation}
where the isomorphism is due to Serre-duality. We will assume that \eqref{eq:nodil} is true for the spaces we consider, but we note that as shown in Appendix \ref{app:exp}, this is true whenever $X_0$ is Calabi-Yau. 

Next consider the second term in Eq.\eqref{eq:delta2W}. As $\delta_1A$ is generic, it follows that we must require
\begin{equation*}
\delta_2(F\wedge\Omega)=0\:.
\end{equation*}
From this it follows that
\begin{equation}
\label{eq:exactFcond}
\Delta_2^a\wedge F_{a\bar b}\,\d z^{\bar b}=\bp_A\alpha_2\:,
\end{equation}
where $\alpha_2=(\delta_2A)^{(0,1)}$, and the operator $\bp_A$ is defined by
\[ \bp_A \beta= \bp\beta + A^{(0,1)}\wedge\beta - (-1)^p \beta\wedge A^{(0,1)}~,\]
for any $p$-form $\beta$, and where $A^{(0,1)}$ denotes the $(0,1)$ part of $A$.  Note that $\bp_A^2 = 0$. 
Eq.\eqref{eq:exactFcond} can be equivalently restated as the condition that
\begin{equation*}
\Delta_2^a\in\ker(\F)\:,
\end{equation*}
where the map
\begin{equation*}
\F\::\;\;\;H^{(p,q)}(TX)\rightarrow H^{(p,q+1)}({\rm End}(V))\: ,
\end{equation*}
is defined by contraction with $F$ as in Eq.\eqref{eq:exactFcond}
\[ {\cal F}(\Delta) =  F_{a\bar b}\,\d z^{\bar b}\wedge \, \Delta^a~.\]
We will show in a moment that by the Bianchi identity for $\cal F$, this map is indeed a map between cohomologies.

Similarly, the third term in the first line of Eq.\eqref{eq:delta2W}  gives the condition
\begin{equation}
\label{eq:exactRcond}
\Delta_2^a\wedge R_{a\bar b}\,\d z^{\bar b}=\bp_\nabla\kappa_2\:,
\end{equation}
where $\kappa_2$ is the $(0,1)$-part of the deformation of $\Theta$.  We also see that we need 
\begin{equation*}
\Delta^a\in\ker(\R)\:,
\end{equation*}
where the map
\begin{equation*}
\R\::\;\;\;H^{(p,q)}(TX)\rightarrow H^{(p,q+1)}({\rm End}(TX))\: ,
\end{equation*}
is given by contraction with $R$ as in Eq.\eqref{eq:exactRcond}. 

This discussion can also be given in terms of cohomologies, as first done by Atiyah \cite{MR0086359} and applied to the heterotic case in \cite{Anderson:2010mh, Anderson:2011ty, Anderson:2013qca, delaOssa:2014cia, Anderson:2014xha}. We give here a brief summary. Consider the operator
\begin{equation*}
\bp_E=\bp_{\tilde\nabla}+\R+\F\::\;\;\;\Omega^{(p,q)}(\g\oplus TX)\rightarrow\Omega^{(p,q+1)}(\g\oplus TX)\:,
\end{equation*}
where $\g=\End(TX)\oplus\End(V)$ and $\bp_{\tilde\nabla} = \bp_{\g} + \bp\,$ is a diagonal  holomorphic connection on the bundle $\g\oplus TX$.  To be clear about this notation,  the operator $\bp_E$, as a matrix acting on forms with values in $\g\oplus TX$, is defined as
\begin{equation*}
\bp_E = 
\left[
\begin{array}{ccc}
\bp_{\nabla}~ & ~0 &  ~\R\\
0 ~& ~\bp_{A} & ~\F\\
0~ & ~0 & ~\bp
\end{array}
\right]
=\left[
\begin{array}{cc}
\bp_{\g}~ & ~\R+\F\\
0~ & ~\bp
\end{array}
\right]
\end{equation*}
A short computation reveals that the operator $\bp_E$ satisfies 
\begin{equation*}
\bp_E^2=\bp_{\tilde\nabla}^2+  \bp_\nabla\R+ \R\bp + \bp_A\F +\F\bp =0\:,
\end{equation*}
 the due to the Bianchi identities for $F$ and $R$.  Indeed it is easy to check that 
 \[  \bp_A\F +\F\bp = \bp_A F~,\]
 with a similar equation for the Bianchi identity of $R$. Note that $\cal F$ and $\cal R$ are then maps between cohomologies.
The holomorphic connection $\bp_E$ therefore defines the bundle $E$ as a holomorphic extension bundle,
\begin{equation}
\label{eq:ses1}
0\rightarrow\g\rightarrow E\rightarrow TX\rightarrow0\:.
\end{equation}
Indeed, the holomorphic structure on the total space of a  holomorphic bundle over a complex manifold, can always be encoded in such a holomorphic extension bundle $E$. We are interested in computing infinitesimal simultaneous deformations of the complex structure on $X$ and the holomorphic structure on the bundles. Equivalently, we want to compute the infinitesimal deformations of $\bp_E$. These are computed by the cohomology
\begin{equation*}
T\M_{\bp_E}=H^{(0,1)}(E)\:,
\end{equation*}
which in turn can be computed by a long exact sequence in the cohomology of Eq.\eqref{eq:ses1}, yielding~\footnote{In the computation of Eq.\eqref{eq:H1Q1} we make the simplifying assumption that $H^0(TX)=0$. This assumption is fine when $X_0$ is Calabi-Yau, see Appendix \ref{app:exp}.}
\begin{equation}
\label{eq:H1Q1}
H^{(0,1)}(E)=H^{(0,1)}(\End(TX))\oplus H^{(0,1)}(\End(V))\oplus\ker(\R + \F)\:.
\end{equation}
Note then that in order for $\bp_E$ to remain holomorphic, thus preserving supersymmetry, we require $\Delta^a\in\ker(\R + \F)$, which is precisely the requirement obtained from the superpotential above. Indeed, the infinitesimal moduli are now given by
\begin{equation*}
x_2=(\Delta_2^a,\alpha_2,\kappa_2)\in H^{(0,1)}(E)\:,
\end{equation*}
satisfying the condition $\bp_E\: x_2=0$. 

\subsection{Anomaly Deformations}
We next consider the second line of Eq.\eqref{eq:delta2W}. Writing the first term out, we get
\begin{equation*}
\int_X\delta_2(H+i\d\omega)\wedge\delta_1\Omega=\int_X\left(\frac{\a}{2}(\tr\:\alpha_2\wedge F-\tr\:\kappa_2\wedge R)+\d\tau_2\right)\wedge\delta_1\Omega\:.
\end{equation*}
The second term of the second line of Eq.\eqref{eq:delta2W} is given by
\begin{equation*}
\int_X(H+i\d\omega)\wedge\delta_2\delta_1\Omega=2i\int_X\p\omega\wedge\delta_2\delta_1\Omega\:.
\end{equation*}
Noting that $H + i\, \omega = 2\, i\, \partial \omega$, 
 it is clear that only the $(1,2)$-part of $\delta_2\delta_1\Omega$ contributes 
\begin{equation*}
(\delta_2\delta_1\Omega)^{(1,2)}= \Delta^a_1\wedge\chi_{2ab\bar c}\,\d z^{b\bar c}
= \Delta_1^a\wedge\Delta_2^b\, \Omega_{a b c}\, \dd x^c\:.
\end{equation*}
Using this, we can rewrite
\begin{align*}
2i\int_X\p\omega\wedge\delta_2\delta_1\Omega
&= 2 i\int_X \p_{[a}\omega_{b]\bar c}\:\dd z^{\bar c}\: \wedge \Delta_1^d
\:\Omega_{def}\:\wedge \Delta_2^{[e}\wedge \dd z^{f]ab}
\\
&= - 2 i\int_X \p_{[a}\omega_{b]\bar c}\:\dd z^{\bar c}\: \wedge \Delta_1^d
\:\Omega_{def}\:\wedge \Delta_2^{[a}\wedge \dd z^{b]ef}
\\
&=-4i\int_X\Delta_2^a\wedge\p_{[a}\omega_{b]\bar c}\:\d z^{b\bar c}\wedge\chi_1\:,
\end{align*}
where in the second line we have used
\[ 0 = 2\, \Delta^{[e}\wedge\dd z^{abf]} = \Delta^{[e}\wedge\dd z^{f]ab} + \Delta^{[a}\wedge\dd z^{b]ef}~.\]
Putting it all together, and requiring $\delta_1\Omega$ generic, we find that we need
\begin{align*}
\bp\tau_2^{(0,2)}&=0\\
-4\Delta^a_2\wedge i\p_{[a}\omega_{b]\bar c}\:\d z^{b\bar c}+\frac{\a}{2}\big(\tr\:(\alpha_2\wedge F)-\tr\:(\kappa_2\wedge R)\big)+\p\tau_2^{(0,2)}+\bp\tau_2^{(1,1)}&=0\:.
\end{align*}
The first equation together with \eqref{eq:nodil} imply that $\tau^{(0,2)}$ is $\bp$-exact, that is
\begin{equation*}
\tau_2^{(0,2)}=\bp\beta^{(0,1)} ,
\end{equation*}
for some $(0,1)$-form $\beta^{(0,1)}$.
The second equation then gives the following condition
\begin{equation}
\label{eq:conddelta2Wanomaly}
-4\Delta^a_2\wedge i\p_{[a}\omega_{b]\bar c}\: \d z^{b\bar c}+\frac{\a}{2}\big(\tr\:(\alpha_2\wedge F)-\tr\:(\kappa_2\wedge R)\big)+\bp\tau_2^{(1,1)}-\bp\p\beta^{(0,1)}=0\:,
\end{equation}
which can be rewritten as 
\begin{equation}
- \H(x_2)_a\d z^a= \frac{1}{2}\bp\left(\tau_2^{(1,1)}-\p\beta^{(0,1)}\right)\:,
\label{eq:Hconst}
\end{equation}
where $\cal H$ is the map defined by
\begin{equation}
\label{eq:mapH}
\H=\hat H+ \F+ \R\::\;\;\;\Omega^{(p,q)}(E)\rightarrow \Omega^{(p,q+1)}(T^*X) ,
\end{equation}
and 
\begin{equation*}
\begin{split}
\H(x)_b&=  \Delta^a\wedge\hat H_{ab\bar c}\:\d z^{\bar c} 
 - \frac{\a}{4}\Big(\tr\:(\alpha\wedge F_{b\bar c}\d z^{\bar c})
-\,\tr\:(\kappa\wedge R_{b\bar c}\d z^{\bar c})\Big)\:,\\[5pt]
\hat H_{ab\bar c}\:\d z^{\bar c} &= 2 i\p_{[a}\omega_{b]\bar c}\:\d z^{\bar c} 
= H ^{(2,1)}_{ab\bar c}\:\d z^{\bar c} \:.
\end{split}
\end{equation*}

We have extended the definition of the maps $\F$ and $\R$ to forms $x = (\kappa,\alpha,\Delta)$ with values in $E$.
In fact,  $\F$ and $\R$ are understood as acting on both $TX$-valued forms as before, and on $\g$-valued forms by the trace on the endomorphism bundles.  That is, we are extending the definition of these maps so that 
\begin{align*}
{\cal F}_b (\alpha) &= ~~ \frac{\a}{4}\: \tr ( F_{b\bar c}\:\dd z^{\bar c}\wedge \alpha)~,\quad
\alpha\in \Omega^{(p,q)}({\rm End}(V))~,\\[3pt]
{\cal R}_b (\kappa) &= - \frac{\a}{4}\: \tr ( R_{b\bar c}\:\dd z^{\bar c}\wedge \kappa)~,\quad
\kappa\in \Omega^{(p,q)}({\rm End}(TX))~.
\end{align*}
Alternatively the pre-factors $\pm\frac{\a}{4}$ could be pulled into a re-definition of the trace on $\g$. Note the different sign of the action of $\R$ relative to the action of $\F$.  Altogether, these maps act as follows \begin{align*}
 \F(x) = 
 \left[
 \begin{array}{cc}
 &\F(\kappa) \\ & \F_a(\alpha) \\ & \F(\Delta)
 \end{array}
 \right]
 =
 \left[
 \begin{array}{cc}
 & 0\\
 & \frac{\a}{4}\, \tr(F_a\wedge\alpha)\\
 & \F_{a\bar b}\:\dd x^{\bar b} \wedge\Delta^a
 \end{array}
 \right]
 ~,\qquad
 \R(x) = 
 \left[
 \begin{array}{cc}
 &\R_a(\kappa) \\ & \R(\alpha) \\ & \R(\Delta)
 \end{array}
 \right]
 =
 \left[
 \begin{array}{cc}
 &- \frac{\a}{4}\, \tr(R_a\wedge\alpha)\\
 & 0\\
 & R_{a\bar b}\:\dd x^{\bar b} \wedge\Delta^a
 \end{array}
 \right]
 ~,
 \end{align*}
where $F_a=F_{a\bar b}\d z^{\bar b}$ and $R_a=R_{a\bar b}\d z^{\bar b}$.
We will see below that the map $\cal H$ is in fact a map between cohomologies.   Hence, we see that the  equation for moduli \eqref{eq:Hconst} for  $x_2$ can be equivalently stated as 
\[ x_2 = \ker({\cal H})~.\] 
This of course is in agreement with what was found from the ten-dimensional supergravity perspective in \cite{delaOssa:2014cia, Anderson:2014xha}.

\subsection{The Map $\H$, and Extending $E$ by $T^*X$}
As will be discussed in Section \ref{sec:Dterm}, there is also an additional constraint coming from the requirement that the bundles involved satisfy the Yang-Mills condition
\begin{align*}
\omega^{mn}F_{mn}&=0\\
\omega^{mn}R_{mn}&=0\:.
\end{align*}
These constraints are manifest in the 4d theory as D-terms. These constraints also appear naturally when we consider the infinitesimal moduli space in terms of cohomologies, as shown in \cite{delaOssa:2014cia}, and part of Section \ref{sec:defD} together with Section \ref{sec:Dterm} is a review of this nice result. In Section \ref{sec:defD} we also consider in more detail the map $\H$, and derive a theorem relating the number of complex structure and bundle moduli which are lifted.

The map $\cal H$ defined in Eq.\eqref{eq:mapH} can be shown to give a well defined map in cohomology if and only if the Bianchi identity 
\begin{equation}
\label{eq:BI}
\d H=\frac{\a}{4}(\tr\:F^2-\tr\:R^2) ,
\end{equation}
is satisfied \cite{delaOssa:2014cia, Anderson:2014xha}. This fact is equivalent to the map $\H$ commuting with the respective cohomology operators
\begin{equation}
\label{eq:clousureH}
\bp\H+\H\,\bp_E = 0\:.
\end{equation}
In fact one obtains
\begin{align*}
\bp(\H_b(x))+\H_b\,(\bp_E\, x)&=
- \frac{1}{4} \Big(\dd H - \frac{\a}{4} (\tr F^2 - \tr R^2)\Big)_{ab\bar c\bar d}~\,
\dd z^{\bar c}\wedge\dd z^{\bar d}\wedge\Delta^a
\\[5pt]
&\quad + \frac{\a}{4} \Big(\tr \big(\bp_A (F_{b\bar c}\dd z^{\bar c})\wedge \alpha\big)
- \tr\big(\bp_\nabla (R_{b\bar c}\dd z^{\bar c})\wedge \kappa\big) \Big)\:,
\end{align*}
where the first line vanishes due to the heterotic Bianchi identity \eqref{eq:BI}, while the second line vanishes due to the Bianchi identities for the curvatures. As also observed in \cite{Anderson:2014xha}, we note that if we view the extension map as bundle-valued form $\H\in\Omega^{(0,1)}(E^*\otimes T^*X)$, then \eqref{eq:clousureH} implies that $\bp_E\H=0$. In other words, $\H$ represents an element of the cohomology
\begin{equation*}
{\rm Ext}^1(E,T^*X)=H^{(0,1)}(E^*\otimes T^*X)\:,
\end{equation*}
which contains the allowed extension classes of $E$ by $T^*X$. This particular extension group was computed in \cite{Anderson:2014xha}. Note that extension classes differing by an exact element of ${\rm Ext}^1(E,T^*X)$ give rise to holomorphically equivalent extensions.

We remark that $\H$ represents a {\it particular} extension class in ${\rm Ext}^1(E,T^*X)$. Indeed, let us consider the operator
\begin{equation*}
\bar D= \bp_{\tilde\nabla}+\C+\hat H\::\;\;\;
\Omega^{(p,q)}(T^*X\oplus E )\rightarrow\Omega^{(p,q+1)}(T^*X\oplus E)\:,
\end{equation*}
where we, for brevity, we have defined
\begin{equation*}
\C=\F+\R ,
\end{equation*}
and $\bp_{\tilde\nabla}$ has been extended to a holomorphic diagonal  connection on the bundle $T^*X\oplus\g\oplus TX$. Written out in matrix form, $\bar D$ reads
\begin{equation}
\bar D\;=\; \left[ \begin{array}{cc}
\bp & \,\H \\
0  & \,\bp_E  \end{array} \right]=
\left[ \begin{array}{ccc}
\bp & \,\C_\g & \,\hat H \\
0 & \,\bp_{\g} & \,\C_{TX} \\
0 & 0 & \bp \end{array} \right]\:,
\label{eq:barD}
\end{equation}
which acts on forms with values in $(T^*X,\g,TX)$, and where, as before, $\bp_{\g}$ is the holomorphic connection on the bundle $\g$. Unless it is clear from the context, we will also denote the map $\C$ as $C_{TX}$ or $\C_\g$ depending on whether it acts on tangent bundle or bundle valued forms respectively. Note that $\C_{TX}$ corresponds to the original Atiyah map.

We now compute  $\bar D^2$ which gives
\begin{equation*}
\bar D^2= 
\left[ \begin{array}{ccc}
~\bp^2 & ~~\bp \C_\g + \C_\g \bp_\g & ~~~\bp\hat H + \hat H\bp + \C_\g\wedge\C_{TX} \\
~0 & ~~\bp_\g^2 & ~~~ \bp_\g\C_{TX} + \C_{TX}\bp \\
~0 & ~~0 & ~~~\bp^2 \end{array} \right]\:,
\end{equation*}
which again vanishes using $\hat H= H^{(2,1)}$, the Bianchi identity on the curvature of $\g$, and the Bianchi Identity \eqref{eq:BI}. 
In particular,
\[ \bp\hat H + \hat H\bp + \C_\g\wedge\C_{TX} = 
\dd H - \frac{\a}{4}\, (\tr F^2 - \tr R^2) = 0~.\]
The operator $\bar D$ defines the holomorphic bundle $\Q$ as an extension
\begin{equation*}
0\rightarrow T^*X\rightarrow\Q\rightarrow E\rightarrow0\:,
\end{equation*}
with extension class $\H$.

We have just seen that the Bianchi identity is responsible for the choice of  a particular extension class $\H$. Furthermore, looking at Eq.\eqref{eq:barD}, we see that we can equivalently define $\Q$ by first extending $\g$ by $T^*X$ with extension class $\C_\g$,
\begin{equation*}
0\rightarrow T^*X\rightarrow \tilde E \rightarrow\g\rightarrow0\:.
\end{equation*}
Note that this is precisely the bundle $\tilde E=E^*$.
We then extend $TX$ by this bundle to get
\begin{equation*}
0\rightarrow E^*\rightarrow\Q\rightarrow TX\rightarrow0\:,
\end{equation*}
with extension class again given by $\H$. Note that this is precisely the construction of $\Q^*$. It follows that the extension class $\H$ is chosen so to make $\Q$ {\it self-dual} as a holomorphic bundle. This can also be seen from the definition \eqref{eq:barD} of the holomorphic operator $\bar D$. It is simply the observation that the maps $\C_{TX}$ and $\C_\g$ are given by the same class as elements of the extension group ${\rm Ext}^1(T^*X, \g)={\rm Ext}^1(\g,TX)=H^{(0,1)}(T^*X\otimes\g)$.

\subsection{On Deformations of $\bar D$ and $\ker(\H)$}
\label{sec:defD}
We now consider infinitesimal deformations of $\bar D$, and compare to what was found by deforming the superpotential. This calculation was done in \cite{delaOssa:2014cia, Anderson:2014xha}, and we quote the results here. It is important to remark that for infinitesimal deformations we do not need to worry about spoiling the self-duality condition, as this is a condition on the extension classes $\{\C,\H\}$, whose cohomology classes do not change under infinitesimal deformations. The infinitesimal moduli space of $\bar D$ is given by
\begin{equation}
\label{eq:ModuliD}
T\M_{\bar D}=H^{(0,1)}(\Q)=\left(H^{(0,1)}(T^*X)\big/{\rm Im}(\F)\right)\oplus\ker(\H)\:,
\end{equation}
where $\ker(\H)\subseteq H^{(0,1)}(\Q_1)$. As expected, this includes the kernel structure which we previously derived from superpotential deformations, that is $x_2\in\ker(\H)$. The $H^{(0,1)}(T^*X)$-part may be interpreted as hermitian moduli, that is $\bp$-closed forms we can add to $\tau_2^{(1,1)}$ without changing Eq.\eqref{eq:conddelta2Wanomaly}. We refer to these as $\tau_0^{(1,1)}$ so that
\begin{equation}
\label{eq:compKahler}
\bp\tau_0^{(1,1)}=0\:,\;\;\;\Rightarrow\;\;\;\tau_0^{(1,1)}\in H^{(1,1)}(X)\cong H^{(0,1)}(TX)\:.
\end{equation}
The reader may wonder if it is appropriate to mod out by $\bp$-exact forms in order to arrive at hermitian moduli valued in the Dolbeault cohomology $H^{(1,1)}(X)$. Moreover, in the final result, the hermitian moduli are also quotiented by ${\rm Im}(\F)$. In Section \ref{sec:Dterm} we will see why this cohomology is correct upon considering the appropriate symmetries of the system. We will also see how the quotient by ${\rm Im}(\F)$ appears when we look at preservation of the Yang-Mills condition and the Green-Schwarz Mechanism \cite{green1984anomaly}. 

Before we move to discuss symmetries and D-terms, it is instructive to investigate the kernel structure of $\H$ a bit further. 
Let
\[ x_t = \left[\begin{array}{c}\gamma_t\\ \Delta_t \end{array}\right] \in \Omega^{(0,1)}(\g\oplus TX)
~, \qquad {\rm and}\quad y_t\in \Omega^{(0,1)}(T^*X)~,\]
where $\gamma_t\in \Omega^{(0,1)}(\g)$ and $\Delta_t\in\Omega^{(0,1)}(TX)$. 
 Then the equations for the moduli of the Strominger system are
\begin{align*}
\bp y_t &= - {\cal H}(x_t)~,\\
\bp_E x_t & = 0~.
\end{align*}
The variable $t$ denotes a generic infinitesimal deformation. We let $t=\{a,b,..\}$, $t=\{i,j,..\}$ or $t=\{x,y,..\}$ depending on if it corresponds to a complex structure, bundle, or hermitian modulus respectively. It should be clear from the context if an index $a$ corresponds to a complex structure modulus or a holomorphic index.

Suppose that $t$ is a deformation which is not a deformation of the hermitian structure, that is, it is a parameter of the holomorphic structure on $E$. In particular, these equations split as
\begin{align}
\bp y_a &= - {\cal H}(x_a) = - \hat H (\Delta_a) - {\cal C}_\g(\gamma_a) = - {\cal H}_\Delta(x_a)~,\label{eq:HDelta}\\
\bp y_i &= - {\cal H}(x_i) = - {\cal C}_\g(\gamma_i)~,\label{eq:Hg}
\end{align}
where $\Delta_a\in \ker\C$ and $\gamma_i\in H^{(0,1)}(\g)$. The first equation depends only on the complex structure moduli $\Delta_a$ (hence the label $\H_\Delta$).  This is easy to see by recalling that the $\gamma_a$ are uniquely determined by the $\Delta_a$ from 
\[ \bp_{\g}\gamma_a = - \C(\Delta_a)~.\]
In fact, there is an ambiguity in obtaining $\gamma$ from this equation, which is that we can add to it a $\bp_{\g}$-closed part.  However these correspond to bundle moduli already accounted for in those elements in $H^{(0,1)}(\g)$. 

In any case, the first equation represents a further lifting of the complex structure moduli so that the extra lifted complex structure parameters live in
\[  L_\Delta = \ker({\cal C}_{TX}) \setminus\ker({\cal H}_\Delta)~,\]
where $\ker({\cal C}_{TX}) \setminus\ker({\cal H}_\Delta)$ denotes the subtraction of $\ker({\cal C}_{TX})$ by $\ker({\cal H}_\Delta)$. The second is a lifting of the bundle moduli so that the lifted bundle moduli live in
\[ L_{\gamma} = H^{(0,1)}(\g)\setminus \ker({\cal C}_\g)~.\]
The $y_a$ and $y_i$ are the necessary deformations of the complexified hermitian structure due to a change of the holomorphic structure on $\cal Q$ so that the  deformed bundle $\Q$ is holomorphic, and the ambiguity of a $\bp$-closed form in each case corresponds to moduli already counted in $H^{(0,1)}(T^*X)$.  

We now prove an interesting theorem for the number of lifted moduli.

\begin{Theorem}
\label{tm:ImH2=ImF}
The number of complex structure moduli lifted by $\C_{TX}$ equals the number of bundle moduli lifted by $\C_{\g}$. In other words,
\begin{equation*}
{\rm dim}\big({\rm Im}(\C_{TX})\big)=\rm{dim}\big({\rm Im}(\C_{\g})\big)\:.
\end{equation*}
\end{Theorem}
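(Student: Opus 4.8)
The plan is to show that, with respect to the natural nondegenerate pairings supplied by Serre duality, $\C_\g$ is (up to sign) the transpose of $\C_{TX}$, and then to invoke the elementary fact that a linear map between finite-dimensional vector spaces and its transpose have the same rank. Since the number of complex structure moduli lifted by $\C_{TX}$ is ${\rm rank}(\C_{TX})=\dim\big({\rm Im}(\C_{TX})\big)$ and the number of bundle moduli lifted by $\C_\g$ is ${\rm rank}(\C_\g)=\dim\big({\rm Im}(\C_\g)\big)$, this is exactly the assertion.

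First I would make precise the observation at the end of the previous subsection: $\C_{TX}\colon H^{(0,1)}(TX)\to H^{(0,2)}(\g)$ and $\C_\g\colon H^{(0,1)}(\g)\to H^{(0,2)}(T^*X)$ are cup product with one and the same cohomology class in $H^{(0,1)}(T^*X\otimes\g)$, represented by $\C=\F+\R$ built out of $F_{a\bar b}\,\d z^{\bar b}$ and $R_{a\bar b}\,\d z^{\bar b}$. Equivalently, $\C_{TX}$ is the connecting homomorphism of the long exact cohomology sequence of $0\to\g\to E\to TX\to0$, Eq.~\eqref{eq:ses1}, and $\C_\g$ is the connecting homomorphism of the dualised sequence $0\to T^*X\to E^*\to\g\to0$ (recall $\tilde E=E^*$, and that $\g=\End(TX)\oplus\End(V)$ is identified with $\g^*$ through the trace form, with the relative sign of the two blocks fixed exactly as in the definitions of $\R$ and $\F$); that the two connecting maps come from the same class is the content of the self-duality of $\Q$.

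Next I would invoke Serre duality. Since $\Omega$ is a nowhere-vanishing holomorphic $(3,0)$-form, $K_X$ is holomorphically trivial, so there are perfect pairings $H^{(0,2)}(\g)\times H^{(0,1)}(\g)\to\mathbb{C}$ and $H^{(0,2)}(T^*X)\times H^{(0,1)}(TX)\to\mathbb{C}$, both of the form $(\phi,\psi)\mapsto\int_X\langle\phi,\psi\rangle\wedge\Omega$, with $\langle\,,\,\rangle$ the trace form on $\g$ in the first case and the natural $T^*X$--$TX$ contraction in the second. The key step is the identity
\[
\int_X\langle\C_{TX}(\Delta),\gamma\rangle_\g\wedge\Omega\;=\;-\int_X\langle\Delta,\C_\g(\gamma)\rangle\wedge\Omega
\qquad\text{for }\Delta\in\Omega^{(0,1)}(TX),\ \gamma\in\Omega^{(0,1)}(\g)\,,
\]
which follows on writing both sides in components: each is the single class $\C$ contracted once in its $T^*X$ slot against $\Delta$ and once in its $\g$ slot against $\gamma$, and the two orders of contraction differ only by the sign from commuting $(0,1)$-forms past one another, i.e.\ from $\d z^{\bar c}\wedge\Delta^a=-\Delta^a\wedge\d z^{\bar c}$ together with the explicit formulae for $\F$ and $\R$. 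Equivalently, this is the standard homological statement that the connecting map of a dualised short exact sequence is, up to sign, the Serre transpose of the original connecting map. Hence $\C_\g=-\C_{TX}^{\vee}$ under the Serre pairings.

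Since transposition preserves rank, $\dim\big({\rm Im}(\C_\g)\big)={\rm rank}(\C_{TX}^\vee)={\rm rank}(\C_{TX})=\dim\big({\rm Im}(\C_{TX})\big)$, which is the theorem. The step I expect to be the main obstacle is the bookkeeping in the middle: being scrupulous about the trace pairing on $\g$ (in particular the sign of the $\End(TX)$ block relative to the $\End(V)$ block, read off from the definitions of $\F$ and $\R$) so that $\C_\g$ is exactly adjoint to $\C_{TX}$, and checking that the behaviour at the ends of the long exact sequences — e.g.\ the assumption $H^0(TX)=0$ already used for Eq.~\eqref{eq:H1Q1} — does not spoil the identification of the two connecting maps with these transposes. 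A more computational alternative that sidesteps abstract Serre duality is to work with harmonic representatives and verify the displayed identity directly by integration by parts, using $\bp_E$-closedness of the representatives and $\d\Omega=0$.
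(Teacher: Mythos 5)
Your proposal is correct and is essentially the paper's own argument: the paper proves the theorem by pairing $\C_\g(\gamma)$ against the $(2,1)$-forms $\chi$ that correspond to $\Delta\in H^{(0,1)}(TX)$ through $\Omega$, and uses nondegeneracy of this pairing on harmonic representatives together with the fact that $\C_{TX}$ and $\C_\g$ are built from the same class (self-duality of $\Q$) — i.e.\ precisely the Serre-duality adjointness you describe, just unpacked as the two inequalities $\dim{\rm Im}(\C_{TX})\ge\dim{\rm Im}(\C_\g)$ and $\le$ rather than quoted as ``a map and its transpose have equal rank.'' Your sign/prefactor caveats (the relative $\pm\tfrac{\a}{4}$ between the $\End(TX)$ and $\End(V)$ blocks) are harmless for the rank statement, as the paper itself notes by absorbing them into the trace on $\g$.
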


\begin{proof}
Suppose that we consider $\gamma_i\in H^{(0,1)}(\g)$ and 
${\cal C}_\g(\gamma_i)\in {\rm Im}({\cal C}_\g)$.  We can think of $\C_\g(\gamma_i)$ as a $(1,2)$-form on $X$ (instead of a $(0,2)$ with values in the holomorphic cotangent bundle).  Let $\chi\in H^{(2,1)}(X)$ and consider the integral
\[ \int_X {\cal C}_\g(\gamma_i)\wedge \chi~.\]
We claim 
\[(\C_\g(\gamma_i),*\bar\chi) = \int_X {\cal C}_\g(\gamma_i)\wedge \chi = 0~, \quad \forall~\chi\in H^{(2,1)}(X)
\qquad\iff\qquad \gamma_i\in\ker({\cal C}_\g)~.\]
Indeed, if 
${\cal C}_\g(\gamma_i)=\bp y_i$ then, by integration by parts, the integral vanishes for every 
$\chi\in H^{(2,1)}(X)$.  On the other hand if the integral vanishes for every $\chi\in H^{(2,1)}(X)$ then, as
$\bp {\cal C}_\g(\gamma_i) = - {\cal C}_\g(\bp_\g(\gamma_i)) = 0$, and as $\{*\bar\chi\}$ spans $\mathcal{H}^{(1,2)}(X)$ for $\chi\in\mathcal{H}^{(2,1)}(X)$, it must be the case that ${\cal C}_\g(\gamma_i)$ is $\bp$-exact. Here $\mathcal{H}^{(*,*)}(X)$ denotes harmonic forms. Consequently, for $\gamma_i\notin \ker({\cal C_\g})$, there exists a $\chi\in H^{(2,1)}(X)$ such that 
\[ \int_X {\cal C}_\g(\gamma_i)\wedge \chi \ne 0~.\]
But for this integral to vanish it must be the case that $F\wedge\chi$ or equivalently $\F(\Delta)$ is non-trivial in cohomology. Here $\Delta\in H^{(0,1)}(TX)$ is the complex structure modulus correspoding to $\chi$. This proves that for every bundle modulus lifted by ${\cal C}_\g$, there is at least one complex structure modulus lifted by
${\cal C}_{TX}$.  Another way to state this result is to say that
\[ \dim({\rm Im}({\cal C}_{TX})) \ge \dim({\rm Im}({\cal C}_{\g}))~,\]
where we have used the  isomorphisms
\[ H^{(0,1)}(TX) \setminus \ker({\cal C}_{TX}) \simeq {\rm Im}({\cal C}_{TX})~,
\qquad
H^{(0,1)}(\g) \setminus \ker({\cal C}_{\g}) \simeq {\rm Im}({\cal C}_{\g})~.\]

Similarly, by considering the claim
\[ \int_X {\cal C}_\g(\gamma)\wedge \chi_a = 0~, \quad \forall~\gamma\in H^{(0,1)}(\g)
\qquad\iff\qquad \Delta_a\in\ker({\cal C}_{TX})~,\]
for $\chi_a\in  H^{(2,1)}(X)$ corresponding to $\Delta_a$, 
one can prove that 
\[ \dim({\rm Im}({\cal C}_{TX})) \le \dim({\rm Im}({\cal C}_{\g}))~.\]

We conclude  that
\[ \dim({\rm Im}({\cal C}_{TX})) = \dim({\rm Im}({\cal C}_{\g}))~,\]
or equivalently that
\[ L_\gamma\simeq H^{(0,1)}(TX) \setminus \ker({\cal C}_{TX})~.\]

\end{proof}
As an aside we note that for a given lifted complex structure modulus $\Delta\in H^{(0,1)}(TX)$, the corresponding bundle modulus $\gamma\in H^{(0,1)}(\g)$ is given by the Serre dual of 
\begin{equation*}
\C(\Delta)\in H^{(0,2)}(\g)\:,
\end{equation*}
where Serre-duality is given by the holomorphic three-form $\Omega$. It should also be noted that the self-duality of $\Q$ is important for the theorem to hold, which is easy to see as this implies that $\C_{TX}$ and $\C_\g$ are given by the same curvature terms.

We should bear in mind that as the right hand side of  equation \eqref{eq:HDelta}, which depends only on the variations $\Delta_a$ of the complex structure, this equation represents a further lifting of the complex structure moduli. We also note that even in the Calabi-Yau case, where $\hat H=0$ (when $\a=0$), this map need not be non-trivial, and so may lead to lifting extra complex structure moduli not lifted by the Atiyah map $\C_{TX}$. We will return to this in more detail later in the paper. I particular when discussing obstructions in Section \ref{sec:yuk}.


\section{Symmetries, D-terms and the Green-Schwarz Mechanism}
\label{sec:Dterm}

Up until now we have been assuming that if a modulus is closed with respect to some holomorphic operator, then the true modulus corresponds to an element of the cohomology given by the holomorphic operator. This is true for the complex structure and bundle moduli, but we are not yet justified in claiming that the complexified K\"ahler moduli given by \eqref{eq:compKahler} take values in $H^{(0,1)}(T^*X)$. We begin this section by showing that this is indeed the case.

\subsection{Symmetries of the Hermitian Moduli}
Let us first consider the imaginary part of the hermitian moduli $\tau_0^{(1,1)}$ given by the $(1,1)$-part of the deformation of the hermitian form $(\delta_0\omega)^{(1,1)}$. Assume that 
\begin{equation*}
\bp\big((\delta_0\omega)^{(1,1)}\big)=0\:.
\end{equation*}
As we shall see below in Section \ref{sec:YM}, the manifold $X$ is also required to be conformally balanced,
\begin{equation}
\label{eq:confbal}
\d\left(e^{-2\phi}\omega\wedge\omega\right)=\d\left(\hat\omega\wedge\hat\omega\right)=0\:,
\end{equation}
where $\hat\omega=e^{-\phi}\omega$ is referred to as the Gauduchon hermitian form. A straightforward but somewhat tedious computation now shows that the preservation of the conformally balanced condition \eqref{eq:confbal} requires that \cite{delaOssa:2014cia, Svanes:2014ala}
\begin{equation}
\label{eq:defconfbal}
\bp^{\tilde\dagger}\left((\delta_0\omega)^{(1,1)}\right)=\bp^{\tilde\dagger}\left(\omega\:\delta_0\phi+\p^{\tilde\dagger}\Lambda^{(2,1)}_0\right)\:,
\end{equation}
where the adjoints are taken with respect to the metric $\tilde g=e^{-2\phi}g$.  This also uses the fact that for a generic deformation of the hermitian form
\begin{equation*}
\delta\hat\omega_0=\lambda_0\:\hat\omega+\hat h_0\:,
\end{equation*}
where the two-form $\hat h_0$ is primitive and $\lambda_0$ is a function.  On a conformally balanced manifold $X$,  is an easy exercise left to the reader to show that, using diffeomorphisms, $\lambda_0$ can in fact be set to a constant. Note that since we assume that
\begin{equation*}
H^{(2,0)}(X)\cong H^{0}(TX)=0\:,
\end{equation*}
where the isomorphism is given by $\Omega$, it follows by the Hodge decomposition of $\hat h^{(2,0)}_0$ that 
\begin{equation*}
\hat h^{(2,0)}_0=\bp^{\hat\dagger}\Lambda^{(2,1)}_0\:.
\end{equation*}
This equation defines the $(2,1)$-form $\Lambda_0$. As $\hat h^{(2,0)}_0$ is determined by the deformation of the complex structure, then so is $\Lambda^{(2,1)}_0$. It therefore follows from \eqref{eq:defconfbal} that the $\bp$-exact part of $(\delta_0\omega)^{(1,1)}$ is determined by the deformations of the complex structure and dilaton. We are therefore correct to mod out by $\bp$-exact forms for the imaginary part of the complexified hermitian moduli \eqref{eq:compKahler} corresponding to deformations of the hermitian form $(\delta_0\omega)^{(1,1)}$.

Next we turn to the real part of the complexified hermitian modulus $\tau_0^{(1,1)}$. We first compute the real part of the two-form $\tau_2$ corresponding to a generic symmetry transformation
\begin{equation}
\label{eq:genericGauge}
\left({\rm Re}(\tau_2)\right)_{\delta}=\d\lambda-\frac{\a}{2}\left(\tr\,F\epsilon_1-\tr\,R\epsilon_2\right)+\frac{\a}{4}\d\left(\tr\,A\epsilon_1-\tr\,\Theta\epsilon_2\right)\:,
\end{equation}
where $\d\lambda$ is an exact form corresponding to the allowed infinitesimal re-parametrisations of the $B$-field, while $\epsilon_1$ and $\epsilon_2$ correspond to infinitesimal gauge-transformations of $A$ and $\Theta$ respectively. Note that in computing \eqref{eq:genericGauge} we also need to consider deformations coming from the gauge-transformation of the $B$-field, required by the Green-Schwarz mechanism \cite{green1984anomaly}
\begin{equation*}
\delta B=-\frac{\a}{4}\Big(\tr\:(\d A\epsilon_1)-\tr\:(\d\Theta\epsilon_2)\Big)\:.
\end{equation*}
At this point, we are not concerned with transformations changing the holomorphic structures on the bundles, so we assume that $\bp_A\epsilon_1=\bp_\nabla\epsilon_2=0$. The tangent bundle $\End(TX)$ is assumed to be irreducible. As we will see in the next section and in Section \ref{sec:EndTconn}, $\End(TX)$ is also required to be a stable bundle, while $\End(V)$ is poly-stable. It follows that $\bp_\nabla\epsilon_2=0$ implies that $\epsilon_2=0$. Moreover, for a poly-stable bundle $V=\oplus_i V_i$ we may assume that
\begin{equation*}
\epsilon_1=\sum_ic_i I_i\:,
\end{equation*}
where the $I_i$'s are identity matrices and the $c_i\in\mathbb{C}$ are chosen so that $\tr\:\epsilon_1=0$. For such gauge transformations, \eqref{eq:genericGauge} reduces to
\begin{equation}
\label{eq:genericGauge2}
\left({\rm Re}(\tau_2)\right)_{\delta}=\d\lambda-\frac{\a}{4}\sum_ic_i\,\tr\,F_i\:,
\end{equation}
where $F_i$ are the curvatures of the individual bundles $V_i$. Consider the first term of the right hand side of \eqref{eq:genericGauge2}. The last term will be discussed below. The $(1,1)$-part of the first term reads 
\begin{equation*}
\d\lambda^{(1,1)}=\bp\lambda^{(1,0)}+\p\lambda^{(0,1)}\:.
\end{equation*}
We hence see that the symmetries we should mod out by when considering the real part of the hermitian moduli $\left({\rm Re}(\tau_0)\right)^{(1,1)}$ also includes $\bp$-exact forms. One might worry about the $\p$-exact terms in the above expression. That is, could there be $\bp$-closed $\p$-exact terms in our hermitian moduli that should not be considered as they are part of the symmetries we mod out by? The answer to this is no. To see why, let $u\in\Omega^{(0,1)}(X)$ and assume that $u$ satisfies
\begin{equation*}
\bp\p u=0\:.
\end{equation*}
Then $\bp u$ will define an element of
\begin{equation*}
H_\p^{(0,2)}(X)\cong H_{\bp}^{(2,0)}(X)=0\:,
\end{equation*}
It follows that $\bp v=u$ for some function $v$. But since we are also assuming that $H^{(0,1)}(X)=0$, it follows that $u=\bp f$ for some function $f$, and hence $\p u$ is $\bp$-exact and therefore not part of the hermitian moduli. Hence we are correct in viewing the complexified hermitian moduli as elements of $H^{(0,1)}(T^*X)$.

Note also that we should mod out the hermitian moduli in \eqref{eq:ModuliD} by the image of $\F$, which we did not see from varying the superpotential. It should be noted that this quotient is a natural part of the gauge-structure of $\Q$, whereby modding out by $\bar D$-exact terms means modding out by terms of the form
\begin{equation}
\label{eq:gaugeD}
\bar Dz=(\bp b+\C_\g(\epsilon)+\hat H(\delta),\:\bp_{\g}\epsilon+\C_{TX}(\delta),\:\bp\delta)\:,
\end{equation}
where $z=(b,\epsilon,\delta)\in\Omega^0(\Q)$, that is $b\in H^0(T^*X)$, $\epsilon_1+\epsilon_2=\epsilon\in H^0(\g)$ and $\delta\in H^0(TX)$. Note then that $\bp\delta$ corresponds to trivial deformations of the complex structure, and we may set $\bp\delta=0$ without loss of generality.  Since $H^0(TX)\cong H^{2,0}(X)=0$ this implies that $\delta=0$. Furthermore, $\bp_{\g}\epsilon$ corresponds to trivial deformations of the holomorphic vector bundles. Setting $\bp_{\nabla}\epsilon=0$ then implies $\epsilon_\nabla=0$ as we saw above. However, $\bp_{\g}\epsilon_A = 0$ does not imply $\epsilon_A=0$, as $V$ is only poly-stable in general. In this case, Eq.\eqref{eq:gaugeD} requires that we should mod out by the image of $\F$, as the elements in ${\rm Im}{(\cal F)}$ precisely correspond to gauge transformations of this kind. This quotient is also understood for the real part of the complexified hermitian moduli by the gauge transformation of $\tau_2$ and the $B$-field, giving rise to the second term on the right hand side of \eqref{eq:genericGauge2}. For the imaginary part of the hermitian moduli, it can be understood as preserving the preserving the Yang-Mills condition of $A$  \cite{delaOssa:2014cia}. Moreover, it is related to D-term conditions in the four-dimensional theory which we discuss next.

\subsection{Yang-Mills Equations and D-terms}
\label{sec:YM}
In this section we discuss the Yang-Mills equations in the ten-dimensional supergravity theory and their relation to D-terms in the effective four-dimensional theory.

In the ten-dimensional theory, in order for the supersymmetric variation of the gaugino fields to vanish, the gauge bundle should be holomorphic with a Yang-Mills connection
\beq
\label{eq:YM}
g^{a \bar{b}} F_{a \bar{b}} =0 .
\eeq
The ${\rm{SU}}(3)$-structure of a heterotic compactification guarantees that the compactification admits a balanced Gauduchon metric, $\hat g=e^{-\phi}g$ where
\begin{equation}
\label{eq:confbal2}
\d(\hat\omega\wedge\hat\omega)=0\:.
\end{equation}
With respect to this metric, the slope for a vector bundle $E$ on $X$ is defined as
\beq
\mu\left(E\right) = \frac{1}{{\rm{rk}}\left(E\right)} \int_X c_1\left(E\right) \wedge \hat{\omega}^2\:.
\eeq
Recall that the gauge-connection $A$ is holomorphic. We will further require the gauge connection $A$ to be hermitian. It was shown in  \cite{donaldson1985anti, uhlenbeck1986existence, Yau87, MR939923, MR915839} that the existence of a hermitian Yang-Mills connection is equivalent to demanding that the bundle is \emph{poly-stable}. A bundle $V_i$ is stable if 
\beq
\mu\left(E \right) < \mu\left(V_i\right) ,
\eeq
for all sub-sheaves $E$ of $V_i$. Further, a bundle that can be written as a direct sum
\begin{equation}
\label{eq:split}
V=\oplus_iV_i\:,
\end{equation}
is poly-stable if each $V_i$ is stable with the same slope $\mu\left(V_i\right) = \mu\left(V\right)$. The slope vanishes for the total bundle $V$ in the case of heterotic string compactifications because the structure group of $V$ is contained in $E_8\times E_8$.

We now discuss how the conformally balanced constraint \eqref{eq:confbal2} on the hermitian form appears from the $D$-terms of the effective theory.   In the four-dimensional theory, we conjecture that there is a D-term potential of the form 
\beq
\label{eq:potD}
V_D\sim\sum_i\mu(V_{i})^2.
\eeq
Demanding that this vanishes of course leads back to the zero-slope condition. Such a potential is exactly of the same form as it appears in Calabi-Yau compactifications, where it can be derived from the term 
\begin{equation*}
\a\int_{\mathcal{M}_{10}}\sqrt{-g}\:\tr\left[(g^{MN}F_{MN})^2\right]\;\;\in\;S_{10} \;,
\end{equation*}
in the ten-dimensional action \cite{Anderson:2009nt, Anderson:2009sw, Anderson:2011cza} (the indices $M, N$ in this equation are ten-dimensional indices). It is reasonable to expect a similar potential in the torsional case,  modified only by the definition of the slope. We note first that this term produces a potential which vanishes iff
\begin{equation}
\label{eq:YMD}
\omega\lrcorner F = 0~,
\end{equation}
when we take the four-dimensional space-time is Minkowski.  
Further, we will shortly see that  the potential \eqref{eq:potD} precisely forces the hermitian moduli to be orthogonal to $F_i\,$, as derived above and in \cite{delaOssa:2014cia}.

The moduli are obtained from the fluctuations around Eq.\eqref{eq:potD}. Considering the mass-matrix for $V_D$ for the bundle $V$  and finding the second order deformation as before we have
\begin{equation*}
\delta_1\delta_2V_D\sim \sum_i\delta_1\left(\mu(V_i)\right)\delta_2\left(\mu(V_i)\right)\:.
\end{equation*}
Therefore the massless modes satisfy
\begin{equation*}
\delta(\mu(V_i))=0\;,\;\; \forall\:i\:,
\end{equation*}
leading to
\begin{equation*}
\int_X\d(\tr\:\delta A_i)\wedge\hat\omega\wedge\hat\omega+2\:\int_X\tr\:F_i\wedge\hat\omega\wedge\delta\hat\omega=0\:, \;\; \forall\:i\:.
\end{equation*}
The first of these terms will generate a potential unless 
\[\d(\hat\omega\wedge\hat\omega)=0~,\]
 which is the conformally balanced condition. The second term generates a potential unless
\begin{equation*}
(\tr\:F_i,\delta\hat\omega)=0\:.
\end{equation*}
In \cite{delaOssa:2014cia} it was shown from a ten-dimensional perspective that this is exactly the same condition required on the imaginary part of the compexified hermitian moduli, $\delta\hat\omega$, in order to preserve the Yang-Mills condition \eqref{eq:YMD}. This then also explains also why the imaginary part should be modded out by $\textrm{Im}(\F)$, completing our understanding of why this quotient appears in \eqref{eq:ModuliD}.

It should be noted that if we allow deformations of the bundle away from the locus where the bundle is a direct sum of the form Eq.\eqref{eq:split}, referred to as a stability wall in the hermitian moduli space, the D-term potential Eq.\eqref{eq:potD} would need to be amended to account for such deformations. The details of this was worked out in \cite{Anderson:2009nt, Anderson:2009sw}, and we do not expect it to change significantly in the torsional case. For a bundle which is close to polystability the slope may be roughly expanded as
\begin{equation*}
\mu(V)=\sum_i\mu(E_i)+ C_i^2 ,
\end{equation*}
where the $E_i$'s are sub-sheaves of $V$ and $C_i$ refers to deformations away from polystability, that is, ``matter" fields in the effective theory. This allows sub-sheaves of negative slope in general. However, away from the poly-stable locus there can be no obstructions for the hermitian moduli  from the D-terms as we saw above. These bundle configurations are hence less interesting from a moduli stabilization point of view.

\section{On the $\End(TX)$-Valued Connection}
\label{sec:EndTconn}
For completeness, we also note that we expect a similar D-term like term to appear for the $\End(TX)$-valued connection $\nabla$. Indeed, the ten-dimensional action has a term
\begin{equation}
\label{eq:DtermR}
\a\int_{M_{10}}e^{-2\phi}\,\tr\,(g^{MN}\: R_{MN})^2\:,
\end{equation}
which for a Minkowski space--time vacuum means that 
\begin{equation}
\label{eq:YM-R}
\omega\lrcorner R=0\:.
\end{equation}
Recall that $\nabla$ is also required to be holomorphic by $\delta W=0$. From \eqref{eq:YM-R} we see that $\nabla$ is also required to satisfy the Yang-Mills condition. That $\nabla$ should be Yang-Mills has also been pointed out in e.g. \cite{Ivanov:2009rh, Martelli:2010jx}. Until now, we have taken $\nabla$ to be a generic holomorphic Yang-Mills connection on $\End(TX)$. It is however known that this connection should depend on the other fields in some particular way \cite{Hull1986187, Sen1986289, 0264-9381-4-6-027, Melnikov:2012cv, Melnikov:2012nm, Hull198651, Hull1986357, Becker:2009df, Melnikov:2014ywa}. Moreover, the deformations $\kappa_2$ of this connection correspond to $\OO(\a)$ field-redefinitions of the other fields \cite{delaOssa:2014msa}.

With the usual field choice, the connection should be the Hull connection to $\OO(\a)$ in order to have a supersymmetry invariant theory \cite{Hull1986187, Sen1986289}
\begin{equation}
\label{eq:field}
\nabla=\nabla^-+\OO(\a)=\nabla^{LC}-\frac{1}{2}H+\OO(\a)\:,
\end{equation}
where the reduction in orders of $\a$ on the right hand side is due to that the connection always appears with an extra factor of $\a$. Here $\nabla^{LC}$ is the Levi-Civita connection. This connection also satisfies the supersymmetry conditions to first order, but not to higher orders \cite{delaOssa:2014msa, Ivanov:2009rh}.\footnote{It should be mentioned that it was shown in \cite{delaOssa:2014msa} that the supersymmetry conditions remain the same to $\OO(\a^2)$, provided $\nabla$ is taken to be an instanton connection at $\OO(\a)$. That is,  
\begin{equation*}
R_{mn}\gamma^{mn}\eta=0+\OO(\a^2)\:.
\end{equation*}
Note that this is not true in general for $\nabla^-$ at $\OO(\a)$ \cite{Ivanov:2009rh, delaOssa:2014msa}.}

As we are working consistently to $\OO(\a)$ in this paper, we may set $\nabla=\nabla^-$. Actually, for compact solutions it is well known that the flux $H$ is of $\OO(\a)$ \cite{Gauntlett:2001ur}, see also Appendix \ref{app:SU3}. It follows that we can choose 
\begin{equation*}
\nabla=\nabla^{LC}+\OO(\a)\:.
\end{equation*}
Note also that to the given order in $\a$ we can assume that $\nabla$ is hermitian. This follows since at zeroth order in $\a$ the internal torsion vanishes, and the Levi-Civita connection equals the Chern connection. In particular, by the theorem of Li and Yau \cite{MR915839} it follows that the bundle {\it $\End(TX)$ is stable}.

Moreover, as $R$ is type $(1,1)$ by construction at this order, it follows that $\Delta_2\in\ker(\R)$ for all $\Delta_2$. In other words,
\begin{equation*}
\ker(\R)=H^{(0,1)}(TX)\:,
\end{equation*}
and $\R$ is the zero map as part of the Atiyah map $\C_{TX}$. By a similar argument as in Theorem \ref{tm:ImH2=ImF} it also follows that $\R$ is also trivial as part of $\C_\g$. We thus see that none of the infinitesimal field redefinitions, corresponding to elements of $H^{(0,1)}(\End(TX))$, are lifted. Moreover,  the condition to be in $\ker(\H)$ becomes a condition on the true moduli $(\Delta_2,\alpha_2)$. Also, to the given order the Yang-Mills condition \eqref{eq:YM-R} is now automatically satisfied, since this is just the Ricci-flatness condition for the zeroth order K\"ahler geometry. D-terms related to this connection of the form \eqref{eq:DtermR} can therefore be ignored.

We hence see that with the standard field choice \eqref{eq:field} to the order we are working at the $\End(TX)$-moduli effectively factor out of the story. The $\R$-part of the maps $\C_{TX}$ and $\C_\g$ is trivial and cannot lift complex structure moduli or deformations of $\nabla$ respectively. As mentioned, non-trivial deformations of this connection correspond to field redefinitions away from the usual field choice \cite{delaOssa:2014msa}. For mathematical rigor, we shall continue to carry the $\End(TX)$-moduli with us for the remainder of the paper, but it should be kept in mind that these moduli are redundant from a physics point of view. Note also that there is still an $\R$-part in the definition of $\H_\Delta$ given in \eqref{eq:HDelta} which cannot be ignored a priori.

\section{Yukawa Couplings and Obstructions}  
\label{sec:yuk}
Having seen how the infinitesimal deformations work, at least up to second order in deformations of the superpotential, it is interesting to consider higher order deformations of the theory. Generically, it is known that not all infinitesimal deformations can be integrated to finite deformations. The barriers to doing so are known as \emph{obstructions} in the mathematics literature. For a holomorphic structure $\bar D$, the condition for the deformations to be unobstructed is that they are in the kernel of the obstruction map
\begin{equation*}
\kappa\::\:H^{(0,1)}(\Q)\rightarrow H^{(0,2)}(Q)\:,
\end{equation*}
often also referred to as the Kuranishi map \cite{kuranishi1971deformations, kobayashi2014differential, kodaira2004complex}. The true moduli of the theory are thus the ones in the kernel of this map, while deformations not in this kernel will be obstructed. These obstructions are known to correspond to higher order Yukawa couplings in the four-dimensional effective theory \cite{Berglund:1995yu}. To show exactly how this works requires us to do higher order deformations of the superpotential Eq.\eqref{eq:suppot}, and show how these Yukawa couplings correspond to obstructions in the deformation theory of $\bar D$. This is quite involved and we leave the full treatment  for future work, see however \cite{2015arXiv150307562G, 2015D-GF-S}. Instead we only investigate a couple of features of the obstructions here, in particular for compactifications where $X_0$ is Calabi-Yau. It should also be noted that obstructions and their correspondence to Yukawa couplings have been considered at length in the literature before, see e.
g. \cite{green2012superstring, Berglund:1995yu, polchinski1998string2, Anderson:2009ge, Anderson:2011ty}. 

In terms of holomorphic structures defined by an extension sequence, it can be shown that the obstruction maps in the corresponding long exact sequence commute with the other induced maps in cohomology \cite{huybrechts1995tangent, Anderson:2011ty, 2015D-GF-S}
\begin{align}
...\rightarrow H^0(E)&\xrightarrow{\H_0} H^{(0,1)}(T^*X)\rightarrow H^{(0,1)}(\Q)\rightarrow H^{(0,1)}(E)\notag\\
\label{eq:Obstr}
&\;\;\;\;\;\;\;\;\;\;\downarrow\:\kappa_{T^*X}\;\;\;\;\;\;\;\;\;\;\downarrow\:\kappa\;\;\;\;\;\;\;\;\;\;\;\;\downarrow\:\kappa_E\\
&\xrightarrow{\H} H^{(0,2)}(T^*X)\rightarrow H^{(0,2)}(\Q)\xrightarrow{\rho} H^{(0,2)}(E)\rightarrow0\notag\:,
\end{align}
where the last zero follows from the slope-zero stability of $T^*X$. The obstruction map $\kappa_E$ can further be sandwiched between obstruction maps of the bundle and base as
\begin{align}
0\rightarrow &H^{(0,1)}(\End(TX))\oplus H^{(0,1)}(\End(V))\rightarrow H^{(0,1)}(E)\rightarrow H^{(0,1)}(TX)\notag\\
\label{eq:Obstr1}
&\;\;\;\;\;\;\;\;\downarrow\:\kappa_{\End(TX)}\;\;\;\;\;\;\;\;\;\;\;\;\downarrow\:\kappa_{\End(V)}\;\;\;\;\;\;\;\;\;\;\downarrow\:\kappa_E\;\;\;\;\;\;\;\;\;\;\;\;\downarrow\:\kappa_{TX}\\
\xrightarrow{\F+\R} &H^{(0,2)}(\End(TX))\oplus H^{(0,2)}(\End(V))\xrightarrow{\rho_E} H^{(0,2)}(E)\rightarrow H^{(0,2)}(TX)\rightarrow...\notag\:,
\end{align}
where we have named the map $\rho_E$ as it will appear in the following computations. 

\subsection{Obstructions for Calabi-Yau Compactifications}
In order to say more about the obstructions, it is perhaps most enlightening to consider reasonably explicit examples of compactifications, where more is known about the individual obstruction maps. For example, if we assume that
\begin{equation}
\label{eq:VanishObstr}
\kappa_{TX}=\kappa_{T^*X}=\kappa_{\End(TX)}=0\:.
\end{equation}
This holds true for Calabi-Yau compactifications, where we also use the field choice described in Section \ref{sec:EndTconn} where the $\End(TX)$-connection is taken to be the Levi-Civita connection. As will be argued at the end of this section, we believe that \eqref{eq:VanishObstr} holds true for cases where $X_0$ is Calabi-Yau as well. Recall also that in this field choice we have $\R_{TX}=\R_\g=0$, leading to
\begin{align*}
\C_{TX}&=\F_{TX}\\
\C_\g&=\F_g\:.
\end{align*}
It should be noted that the computation in the next paragraph leading to Eq.\eqref{eq:kerCY} was first carried out in \cite{Anderson:2011ty}, but we repeat it here for completeness.

It follows from a diagram chase of \eqref{eq:Obstr1} that
\begin{align*}
\textrm{dim}\left(\textrm{ker}(\kappa_E)\right)&=h^{(0,1)}(E)-\textrm{dim}\left(\textrm{Im}(\kappa_E)\right)=h^{(0,1)}(E)-\textrm{dim}\left(\textrm{Im}(\rho_{\End(V)}\kappa_{\End(V)})\right)\:,
\end{align*}
where $\rho_{\End(V)}$ is $\rho_E$ restricted to $H^{(0,1)}(\End(V))$, and we denote by $h$ the dimension of the corresponding cohomologies. It follows that
\begin{equation}
\begin{aligned} \label{eq:kerCY}
\textrm{dim}\left(\textrm{ker}(\kappa_E)\right)&\ge h^{(0,1)}(E)-\textrm{dim}\left(\textrm{Im}(\rho_{\End(V)})\right)\\
&=h^{(0,1)}(E)-h^{(0,2)}(\End(V))+\textrm{dim}\left(\textrm{Im}(\F)\right)\\
&=h^{(0,1)}(\End(TX))+h^{(0,1)}(\End(V))+h^{(0,1)}(TX)-h^{(0,2)}(\End(V))\\
&=h^{(0,1)}(\End(TX))+h^{(0,1)}(TX)\:,
\end{aligned}
\end{equation}
where in the last equality we have used that $h^{(0,1)}(\End(V))=h^{(0,2)}(\End(V))$ by Serre duality. Notice that the moduli $h^{(0,1)}(\End(TX))$ corresponding to the field redefinitions are unobstructed.

Next, consider the full obstruction map $\kappa$ in \eqref{eq:Obstr} for compactifications where \eqref{eq:VanishObstr} applies, and in particular Calabi-Yau compactifications. As $\kappa_{T^*X}=0$, it follows that any nontrivial element of $\textrm{Im}(\kappa)$ must have a nontrivial image in $H^{(0,2)}(E)$ under $\rho$ corresponding to a non-trivial image of $\kappa_E$. It follows that
\begin{align*}
\textrm{dim}\left(\textrm{ker}(\kappa)\right)&=h^{(0,1)}(\Q)-\textrm{dim}\left(\textrm{Im}(\kappa)\right)\ge h^{(0,1)}(\Q)-\textrm{dim}\left(\textrm{Im}(\kappa_E)\right)\\
&=h^{(0,1)}(T^*X)-\textrm{dim}\left(\textrm{Im}(\H_0)\right)+\textrm{dim}\left(\textrm{ker}(\kappa_E)\right)-\textrm{dim}\left(\textrm{Im}(\H)\right)\\
&\ge h^{(0,1)}(T^*X)-\textrm{dim}\left(\textrm{Im}(\H_0)\right)+h^{(0,1)}(\End(TX))+h^{(0,1)}(TX)-\textrm{dim}\left(\textrm{Im}(\H)\right)\:.
\end{align*}
Recall that 
\begin{equation*}
\textrm{dim}\left(\textrm{Im}(\H)\right)\le\textrm{dim}\left(\textrm{Im}(\C_\g)\right)+\textrm{dim}\left(\textrm{Im}(\H_\Delta)\right)=\textrm{dim}\left(\textrm{Im}(\F_{TX})\right)+\textrm{dim}\left(\textrm{Im}(\H_\Delta)\right)\:,
\end{equation*}
where we have used Theorem \ref{tm:ImH2=ImF} and the fact that $\textrm{Im}(\C_{TX})=\textrm{Im}(\F_{TX})$ in this field choice. It follows that
\begin{align}
\label{eq:Limobstr}
\textrm{dim}\left(\textrm{ker}(\kappa)\right)-h^{(0,1)}(\End(TX))\ge 
h^{(0,1)}(T^*X)-\textrm{dim}\left(\textrm{Im}(\H_0)\right)+\textrm{dim}\left(\textrm{ker}(\H_\Delta)\right)\:.
\end{align}
This shows that the number of true moduli in such compactifications, given by the left hand side of \eqref{eq:Limobstr}, will always be greater or equal to the number of original K\"ahler and complex structure moduli of the base, not lifted by the procedure outlined in the previous sections. That is, the {\it massless moduli} of the base. They need not correspond to complex structure or K\"ahler moduli anymore though, as they may have been replaced by unobstructed bundle moduli. We also note that we always carry with us the moduli corresponding to field redefinitions, which are counted by $h^{(0,1)}(\End(TX))$. Modulo such field redefinitions, they can be ignored.

It should again be stressed that the above calculation only holds for cases where \eqref{eq:VanishObstr} is true. In particular for Calabi-Yau compactifications. We believe that the result also holds for more general compactifications, in particular for compactifications where $X_0$ is Calabi-Yau. Indeed, compactifications with a zeroth order compact Calabi-Yau geometry satisfy the $\p\bp$-lemma as shown Appendix \ref{app:exp}, and which is enough for $\kappa_{TX}$ to vanish \cite{tian1987smoothness}. As we also argue in Appendix \ref{app:exp}, the Hodge-diamond of $X$ for such compactifications is not affected by $\a$-corrections. In particular, the infinitesimal K\"ahler moduli space is given by 
\begin{equation*}
H^{(1,1)}_{\bp}(X)=H^{(1,1)}_{\bp}(X_0)\:.
\end{equation*}
Hence the \kahler moduli are counted by the \kahler moduli of the zeroth order base, which are unobstructed. Finally, the infinitesimal deformations of the zeroth order tangent bundle $H^{(0,1)}(\End(TX_0))$ of a Calabi-Yau are also unobstructed. These also correspond to the deformations of the connection $\nabla=\nabla^{LC}+\OO(\a)$ in the usual field choice. These arguments suggest that \eqref{eq:VanishObstr} holds true also for cases where $X_0$ is Calabi-Yau. They are however somewhat hand-wavy and remain to be confirmed with more mathematical rigor. This is a subject of further study.
 
\section{Calabi-Yau Compactifications, and Surjective and Injective Maps} 
\label{sec:examples}
Having discussed the moduli space, it is instructive to consider some examples. The first, and perhaps simplest, case to consider are Calabi-Yau compactifications, as discussed above when we considered the obstructions. One might expect that there is not much more to say for such examples then what is already known. There are however a couple of important points we wish to emphasize, related to the inclusion of the Bianchi identity in the moduli story. First note that in these examples there is of course no torsion, i.e. $\hat H=0$, but that does not mean that the $\H$-map is trivial. Indeed, as we saw with Theorem \ref{tm:ImH2=ImF}, the number of bundle moduli lifted by $\C_\g$ is the same as the number of complex structure moduli lifted by the original Atiyah map $\C_{TX}$. Moreover, even tough the torsional part of $\H_\Delta$ vanishes this does not a priori guarantee that $\H_\Delta$ is trivial, and this can lead to additional stabilization of complex structure moduli. We use the standard field choice of 
Section \ref{sec:EndTconn} in this section, so
\begin{align*}
\C_{TX}=\F_{TX}\:&:\;\;\;H^{(0,1)}(TX)\rightarrow H^{(0,2)}(\End(V))\\
\C_\g=\F_\g\;\;\;\:&:\;\;\;H^{(0,1)}(\End(V))\rightarrow H^{(0,2)}(T^*X)\:.
\end{align*}

\subsubsection*{The Standard Embedding}
Let us consider the simplest example of a Calabi-Yau compactification, namely the standard embedding. In this example, we are at the locus where we identify the gauge connection $A$ with the tangent bundle connection $\nabla$. This is often referred to as the $(2,2)$-locus from a sigma model perspective. 

At the standard embedding locus, we have $F=R$, which given the discussion in Section \ref{sec:EndTconn}, shows that $\F=\H=0$. One might think that it is only the $\F_\g$-map that vanishes in this case, but it turns out that $\H_\Delta$ is trivial in this case as well. Indeed, as the curvatures are the same, we find for a complex structure $\Delta_a$ that $\kappa_a=\alpha_a$, which gives 
\begin{equation*}
\H_\Delta(\Delta_a)=0\;\;\;\forall\;\Delta\in H^{(0,1)}(TX)\:,
\end{equation*}
and so $\H_\Delta=0$ as well. This means that the corresponding long exact sequences in cohomology split into short exact sequences, so that
\begin{equation*}
H^{(0,1)}(\Q)=H^{(0,1)}(T^*X)\oplus H^{(0,1)}(E)=H^{(0,1)}(T^*X)\oplus H^{(0,1)}(TX)\oplus H^{(0,1)}(\g)\:.
\end{equation*}
Furthermore, the obstructions are given by $\rm{Im}(\kappa)$, whose dimension in this case is given by 
\begin{equation*}
\textrm{dim}\left(\textrm{Im}(\kappa)\right)=\textrm{dim}\left(\textrm{Im}(\kappa_E)\right)=\textrm{dim}\left(\textrm{Im}(\kappa_{\End(V)})\right)\:.
\end{equation*}
In other words, the moduli of the base remain unobstructed, while there are potential obstructions to the bundle moduli. This case has been examined in great detail the literature before, and the reader is referred to e.g. \cite{green2012superstring, Berglund:1995yu, polchinski1998string2} and references therein for more details. 

\subsubsection*{A Surjective Atiyah Map}
Next, we consider cases where the Atiyah map $\F_{TX}$ is surjective. Restricting ourselves to the physical bundle moduli, by Theorem \ref{tm:ImH2=ImF} the map $\F_\g$ then becomes injective. Note that this also holds outside of the Calabi-Yau locus. We thus see that in such cases, all bundle moduli are lifted. Examples of this kind with a surjective $\F_{TX}$ can be found in e.g. \cite{Anderson:2010mh}. Indeed, in Section 5 of \cite{Anderson:2010mh} an explicit example of a bundle $V$ is constructed, where the number of lifted complex structure moduli equals the dimension of $H^{(0,2)}(\End(V))$, implying that $\F_{TX}$ is surjective. Including the Bianchi identity, it follows that $\H_2$ is injective, which means that all bundle moduli corresponding to $\End(V)$ are lifted in this case. 

It should however be noted that this need not have an effect on the physical matter spectrum. Indeed, the authors of \cite{Anderson:2010mh} suggest to use the hidden $E_8$-bundle to stabilize complex structure moduli in more phenomenology oriented models. In this case one only lifts bundle moduli corresponding to deformations of the hidden bundle, and hence the physical spectrum important for phenomenology is unaffected.

\subsubsection*{Stabilising Complex Structure Moduli and an Injective Atiyah Map}
For completeness, we also consider what happens in examples where the Atiyah map $\F_{TX}$ is injective. It is clear that in such examples all complex structure moduli will get lifted, and they are hence of great phenomenological value. See e.g. \cite{Anderson:2011cza, Anderson:2011ty} for examples and developments on this front. It should also be noted that, given Theorem \ref{tm:ImH2=ImF}, one can equivalently look for example where the map $\F_\g$ is surjective. Moreover, the map $\H_\Delta$ can potentially lift additional complex structure moduli should the Atiyah map not be injective.

\section{Discussion and Conclusions}

In this paper we have studied the infinitesimal moduli space of heterotic string compactifications from the perspective of the 4d theory. By considering F-terms we have shown that some `would-be' moduli are actually heavy and removed from the effective theory. A further restriction of the moduli space occurs from D-terms, and combined with the F-term conditions we have shown that our results are equivalent to those previously obtained from the 10d theory. We have also reviewed how this may be phrased in terms of maps on coholomogies and continued the study of these maps. Finally, we have briefly considered Yukawa couplings in such theories, and considered some examples. 

Of course, an enormous amount of work remains before such torsional compactifications are fully understood and potentially able to lead to fully realistic low energy phenomenology. An obvious omission is our present lack of knowledge of the \kahler potential, although this is the subject of current work \cite{delaOssa}. It may be hoped that, given the holomorphic structures discussed, the \kahler potential will take a fairly simple and elegant form. Indeed, holomorphic structures usually come equipped with some form of Weil-Peterson metric on their moduli space, and one can speculate that the \kahler metric one obtains upon dimensional reduction corresponds to such a metric. However, for this present time this remains an open question. The story with Yukawa couplings is also far from complete. In particular, the connection between higher order deformations of the superpotential and obstructions has not yet been made explicit and it would be interesting to see how the details of this emerge. Knowledge of the Yukawa couplings is also very important for phenomenological purposes as well. 

It would also be very interesting to study explicit examples of compactifications with torsion. Compactifications where a large volume Calabi-Yau locus exists are fairly easy to construct once the zeroth order \kahler geometry is known, and it would be interesting to investigate further what effects (if any) the $\a$-generated torsion has for lifting further moduli. Studying examples where no zeroth order limit exist is more challenging. Examples of this kind found in the literature \cite{Dasgupta:1999ss, Becker:2006et} have been shown to negate some of the assumptions we make. In particular $H^{(0,1)}(X)\neq0$ \cite{Cyrier:2006pp}. It is hence less clear if $H^{(0,1)}(\Q)$ counts the true moduli for these types of compactifications, but it can be taken as a conjecture. See also \cite{GarciaFernandez:2015hja} for more details on this. In the longer term, it may even be hoped that there is the possibility of constructing examples with all moduli either removed from the low energy theory, or otherwise 
stabilised with phenomenologically acceptable masses. Investigations of other aspects of the low energy phenomenology, e.g. the number of Standard Model generations, exotic matter present, etc. may also be possible and interesting.

\section*{Acknowledgements}
We are grateful to  Lara Anderson,  Mario Garc\'\i a-Fern\'andez, James Gray, Dan Israel, Spiro Karigiannis, Magdalena Larfors, Andre Lukas and Daniel Plencner for useful discussions.  EES is supported by the ILP LABEX (under reference ANR-10-LABX-63), and by French state funds managed by the ANR within the Investissements dAvenir program under reference ANR-11-IDEX-0004-02.  XD's research is supported in part by the EPSRC grant BKRWDM00.

\appendix

\section{Compactifications and $SU(3)$-structure}
\label{app:SU3}
In this appendix, we review some results concerning $SU(3)$-structure manifolds and string compactifications. The ten-dimensional geometry is assumed to have the form of a warped direct product,
\begin{equation*}
M_{10}=M_4\times X_6,
\end{equation*}
where $M_4$ is four-dimensional space-time, and $X_6$ is the compact internal space. We will use small roman indices $\{m,n,..\}$ to denote real indices on $X_6$, and greek indices to denote indices on $M_4$ whenever needed. The ten-dimensional supercharge given by the Majorana-Weil spinor $\epsilon$ decomposes as
\begin{equation*}
\epsilon=\rho\otimes\eta,
\end{equation*}
where $\rho$ is a four-dimensional space-time spinor and $\eta$ is a spinor on $X_6$. The spinor $\eta$ defines an $SU(3)$-structure, with  the non-degenerate two-form $\omega$ and nowhere vanishing well defined three-form $\Psi$ on $X_6$ given by
\begin{align*}
\omega_{mn}&=-i\eta^\dagger\gamma_{mn}\eta,\\
\Psi_{mnp}&=\eta^T\gamma_{mnp}\eta,
\end{align*}
where $\gamma_m$ are the gamma matrices that satisfy the Clifford algebra in six dimensions, and $\gamma_{m_1m_2\ldots m_p}$ denotes the totally antisymmetric product of $p$ gamma matrices.  
These forms satisfy the usual $SU(3)$-structure identities
\begin{equation*}
\omega\wedge\Psi=0,\;\;\;\;\frac{i}{\vert\vert\Psi\vert\vert^2}\Psi\wedge\bar\Psi=\frac{1}{6}\omega\wedge\omega\wedge\omega.
\end{equation*}
The three-form $\Psi$ defines an almost complex structure $J$ on $X_6$ by
\begin{equation}
\label{eq:complexstr}
{J_m}^n=\frac{{I_m}^n}{\sqrt{-\frac{1}{6}\tr I^2}},
\end{equation}
where the tangent bundle endomorphism $I$ is given by
\begin{equation*}
{I_m}^n=(\textrm{Re}\Psi)_{mpq}(\textrm{Re}\Psi)_{rst}\epsilon^{npqrst}.
\end{equation*}
The normalization in \eqref{eq:complexstr} is needed so that $J^2=-1$. Note also that the complex structure $J$ is independent of rescalings of $\Psi$.

\subsection*{$SU(3)$-Structures}
A general $SU(3)$-structure has five torsion classes, $(W_0, W_1^\omega,W_1^\Psi,W_2,W_3)$, where \cite{0444.53032, 1024.53018, LopesCardoso:2002hd, Gauntlett:2003cy}
\begin{align*}
\d\omega&=-\frac{12}{\vert\vert\Psi\vert\vert^2}\textrm{Im}(W_0\bar\Psi)+W_1^\omega\wedge\omega+W_3\\
\d\Psi&=W_0\:\omega\wedge\omega+W_2\wedge\omega+\bar W_1^\Psi\wedge\Psi\:.
\end{align*}
Here $W_0$ is a complex function, $W_2$ is a primitive $(1,1)$-form,  $W_3$ is  a real  primitive three-form of type $(1,2)+(2,1)$, $W_1^\omega$ is a real one-form, and $W_1^\Psi$ is a $(1,0)$-form. The one forms $W_1^\omega$ and $W_1^\Psi$ are known as the Lee-forms of $\omega$ and $\Psi$ respectively, and they are given by
\begin{align*}
W_1^\omega&=\frac{1}{2}\omega\lrcorner\d\omega\\
W_1^\Psi&=\frac{1}{\vert\vert\Psi\vert\vert^2}\Psi\lrcorner\d\bar\Psi.
\end{align*}
It should be noted that $W_2=W_0=0$ is equivalent to the vanishing of the Nijenhaus tensor, and therefore equivalent to $X_6$ being complex. Note also that under a rescaling $\Psi\rightarrow\lambda\Psi$ which leaves the complex structure invariant, $\lambda\in\mathbb{C}^*$, the lee-forms and $W_3$ remain invariant, while 
\begin{equation*}
W_0\rightarrow\lambda W_0,\;\;\;\;W_2\rightarrow\lambda W_2.
\end{equation*}
Interestingly, it is only the torsion classes which spoil integrability of the complex structure that scale with $\lambda$. 

\subsection*{The Strominger System}
Let us recall the six-dimensional equations that should be solved in order to have a supersymmetric solution to the equations of motion up to first order in $\a$,
\begin{align}
\label{eq:apstrom1}
\d(e^{-2\phi}\Psi)=\d\Omega&=0+\OO(\a^2)\\
\label{eq:apstrom2}
\d(e^{-2\phi}\omega\wedge\omega)&=0+\OO(\a^2)\\
\label{eq:apstrom3}
e^{2\phi}*\d(e^{-2\phi}\omega)=i(\p-\bp)\omega&=H+\OO(\a^2)\\
\label{eq:apstrom4}
\omega\lrcorner F= 0 +\OO(\a)~, \qquad \Omega\wedge F&=0+\OO(\a)\\
\label{eq:apstrom5}
\omega\lrcorner R=0+\OO(\a)~,\qquad\Omega\wedge R&=0+\OO(\a)\:,
\end{align}
where $\Omega=e^{-2\phi}\Psi$. Note the reduction of the order of $\a$ in the last two equations due to the fact that these curvature terms only appear at first order. in the geometry. This set of equations together with the Bianchi Identity
\begin{equation*}
\d H=\frac{\a}{4}(\tr\: F\wedge F-\tr\: R\wedge R)+\OO(\a^2) ,
\end{equation*}
is commonly referred to as the Strominger system. If $X$ is compact, then a simple application of Stokes theorem shows that the flux $H$ is of order $\a$ or smaller \cite{Gauntlett:2003cy}
\begin{equation*}
\vert\vert e^{-\phi}H\vert\vert^2=\int_X e^{-2\phi}H\wedge*H=-\int_X H\wedge \d(e^{-2\phi}\omega)=\OO(\a)\:,
\end{equation*}
as $\d H=\OO(\a)$. If we exclude non-integer orders of $\a$, it follows that $H=\OO(\a)$, and also $\d\phi=\OO(\a)$ leaving us with $\d\omega=\OO(\a)$ and a Calabi-Yau geometry at zeroth order. 

\section{The $\a$-Expansion and the Strominger System}
\label{app:exp}
In this appendix, we comment on the $\alpha'$-expansion of heterotic supergravity. It should be mentioned that studies of heterotic supergravity in terms of the $\a$-expansion have been carried out before, see in particular \cite{Witten:1986kg} and e.g. \cite{Gillard:2003jh, Anguelova:2010ed}. Moreover, it should be stressed that the types of geometries discussed here are only a subset of the geometries for which the main results of the paper applies. In particular, the geometries discussed in this appendix all have a large volume, $\a\rightarrow0$ Calabi-Yau limit which is smooth and compact.

Note first that the $\a$-expansion is really an expansion in terms of the dimensionless parameter 
\begin{equation*}
\beta=\alpha'\big/[\textrm{Vol}(X_6)]^{\frac{1}{3}}\:,
\end{equation*}
where we assume a large volume, and thus that the supergravity limit is valid, but we'll mostly suppress this in the following. We also stress the philosophy adopted in this expansion, that $\a$-corrections are small. In particular, we are assuming a well-defined topology zeroth order in $\a$, and that the $\a$-corrections can only change the geometry which we assume to be smooth at each order in $\a$. Corrections can then be computed order by order using the lower order geometry. This assumption excludes examples of the kind found by Dasgupta et al \cite{Dasgupta:1999ss}, later expanded upon by Yau et. al. \cite{Becker:2006et}, as these geometries are highly singular in the $\a\rightarrow0$ limit. It should however be stressed that within these solutions it is hard to find a large-volume limit and the supergravity approximation is perhaps less applicable in this sense \cite{Becker:2003gq, Melnikov:2014ywa}. Hence, the $\a$-expansion scheme does not cover all solutions to heterotic supergravity, only the 
large volume regime. 

The various fields in the theory have an $\alpha'$ expansion. For a generic field $\Phi$ we have
\begin{equation*}
\Phi=\Phi_0+\a\Phi_1+\cdots~.
\end{equation*} 
The zeroth order geometry when $\a\rightarrow0$ is denoted by $X_0$. We assume that this geometry is smooth and compact, and as we will see below this implies that the geometry is Calabi-Yau. We also mention that we do not expand the complex structure. That is, a complex structure can be defined on $X$ without any reference to a metric, and thus no inherent size associated to it so that
\begin{equation*}
J=J_0\:.
\end{equation*}
This simplifies our lives, as it means the holomorphic structures do not need an $\a$-expansion. In particular, $\partial$ and $\bar\partial$ are not have $\a$ corrections. In addition, clearly if a form is of type $(n,m)$ at zeroth order, it remains so at higher orders. That is, the Hodge-type is preserved by the expansion. We also note that since the complex structure is size-independent, we expect the geometry to remain complex at higher orders in $\a$.  A complex geometry is equivalent to
\begin{equation*}
\bp^2=0,
\end{equation*}
and as this equation receives no corrections, it will remain true to all orders. From this, we conjecture that higher perturbative $\a$-corrections will respect a complex base. Of course, this can be spoiled by non-perturbative corrections, but we do not consider these in the current paper.

Finally, we collect some useful facts about geometries where the large volume limit is a compact Calabi-Yau. In particular, recall that a sufficient condition for a complex manifold $(X,J)$ to satisfy the $\p\bp$-lemma is the existence of a K\"ahler form compatible with $J$. Since the complex structure does not change under $\a$-corrections, and since there must exist a K\"ahler form $\omega_0$ corresponding to the zeroth order Calabi-Yau geometry $X_0$, it follows that the corrected geometry $X$ satisfies the $\p\bp$-lemma. Moreover, as the Dolbeault operator $\bp$ remains unchanged under $\a$-corrections, we can conclude that the Hodge-diamond of $X$ does not change either. Indeed, as the Dolbeault cohomologies of a Calabi-Yau manifold are topological, and as we have seen $X$ admits a K\"ahler metric, any change to this at higher orders in $\a$ implies topological changes of $X$ which contradicts the assumptions of the $\a$-expansion. Note that a similar statement need not hold for bundle valued cohomologies, as the connections on the given bundles can potentially receive corrections, and the bundles need no longer be holomorphic in general.


\providecommand{\href}[2]{#2}\begingroup\raggedright\endgroup

\end{document}